\begin{document}
\author{\Name{Nicholas Ruozzi} \Email{nicholas.ruozzi@epfl.ch}\\
       \addr Communication Theory Laboratory\\
       \'{E}cole Polytechnique F\'{e}d\'{e}rale de Lausanne\\
       Lausanne 1015, Switzerland
       \AND
       \Name{Sekhar Tatikonda} \Email{sekhar.tatikonda@yale.edu}\\
       \addr Department of Electrical Engineering\\
       Yale University\\
       New Haven, CT 06520, USA}

\title[Message-Passing Algorithms for Quad. Minimization]{Message-Passing Algorithms for Quadratic Minimization}
\maketitle

\begin{abstract}
Gaussian belief propagation (GaBP) is an iterative algorithm for computing the mean of a multivariate Gaussian distribution, or equivalently, the minimum of a multivariate positive definite quadratic function.  Sufficient conditions, such as walk-summability, that guarantee the convergence and correctness of GaBP are known, but GaBP may fail to converge to the correct solution given an arbitrary positive definite quadratic function.  As was observed by \citet{malioutov}, the GaBP algorithm fails to converge if the computation trees produced by the algorithm are not positive definite.  In this work, we will show that the failure modes of the GaBP algorithm can be understood via graph covers, and we prove that a parameterized generalization of the min-sum algorithm can be used to ensure that the computation trees remain positive definite whenever the input matrix is positive definite.  We demonstrate that the resulting algorithm is closely related to other iterative schemes for quadratic minimization such as the Gauss-Seidel and Jacobi algorithms.  Finally, we observe, empirically, that there always exists a choice of parameters such that the above generalization of the GaBP algorithm converges.
\end{abstract}

\begin{keywords}
belief propagation, Gaussian graphical models, graph covers
\end{keywords}

\section{Introduction}
In this work, we study the properties of reweighted message-passing algorithms with respect to the quadratic minimization problem.  Let $\Gamma\in \mathbb{R}^{n\times n}$ be a symmetric positive definite matrix and $h\in\mathbb{R}^{n}$.  The quadratic minimization problem is to find the $x\in\mathbb{R}^n$ that minimizes $f(x) = \frac{1}{2}x^T\Gamma x - h^Tx$.  Minimizing a positive definite quadratic function is equivalent to computing the mean of a multivariate Gaussian distribution with a positive definite covariance matrix or equivalently, solving the positive definite linear system $\Gamma x = h$ for the vector $x$.  

Gaussian belief propagation (GaBP), is an iterative message-passing scheme that can be used to estimate the mean of a Gaussian distribution as well as individual variances.  Because of their distributed nature and their ability to provide estimates of the individual means and variances for each variable, using the GaBP and min-sum algorithms to solve linear systems has been an active area of research.  

In previous work, several authors have provided sufficient conditions for the convergence of GaBP.  \citet{weissgauss} demonstrated that GaBP converges in the case that the covariance matrix is diagonally dominant. \citet{malioutov} proved that the GaBP algorithm converges when the covariance matrix is walk-summable. \citet{quadciamac,convexciamac} showed that scaled diagonal dominance was a sufficient condition for convergence and also characterized the rate of convergence via a computation tree analysis.  The later two sufficient conditions, walk-summability and scaled diagonal dominance, are known to be equivalent \citep{malthesis, allerton09}.  

While the above conditions are sufficient for the convergence of the GaBP algorithm they are not necessary: there are examples of positive definite matrices that are not walk-summable for which the min-sum algorithm still converges to the correct solution \citep{malioutov}.  A critical component of these examples is that the computation trees remain positive definite throughout the algorithm.  Such behavior is guaranteed if the original matrix is scaled diagonally dominant, but arbitrary positive definite matrices can produce computation trees that are not positive definite \citep{malioutov}.  If this occurs, the standard GaBP algorithm fails to produce the correct solution.

One proposed solution to the above difficulties is to precondition the covariance matrix in order to force it to be scaled diagonally dominant.  Diagonal loading was proposed as one such useful preconditioner in \citet{johnsonfix}.  The key insight of diagonal loading is that scaled diagonal dominance can be achieved by sufficiently weighting the diagonal elements of the covariance matrix.  The diagonally loaded matrix can then be used as an input to a GaBP subroutine.  The solution produced by GaBP is then used in a feedback loop to produce a new matrix, and the process is repeated until a desired level of accuracy is achieved.  Unfortunately, this approach results in an algorithm that, unlike GaBP, is not decentralized and distributed choosing the appropriate amount of diagonal loading and feedback to achieve quick convergence remains an open question. 

Recent work has studied provably convergent variants of the min-sum algorithm.  The result has been the development of many different ``convergent and correct'' message-passing algorithms:  MPLP \citep{MPLP}, max-sum diffusion \citep{MSD}, norm-product belief propagation \citep{hazan}, and tree-reweighted belief propagation \citep{waintree}.  Each of these algorithms can be viewed as a coordinate ascent/descent scheme for an appropriate lower/upper bound.  A general overview of these techniques and their relationship to bound maximization can be found in \citet{sontag} and \citet{weissconv}.  These algorithms guarantee convergence under an appropriate message-passing schedule, and they also guarantee correctness if a unique assignment can be extracted upon convergence.  Such algorithms are a plausible candidates in the search for convergent GaBP style message-passing algorithms.

The primary contributions of this work are twofold.  First, we demonstrate that graph covers can be used to provide a new, combinatorial characterization of walk-summability.  This characterization allows us to conclude that ``convergent and correct message-passing'' schemes based on dual optimization techniques that guarantee the correctness of locally decodable beliefs cannot converge to the correct minimizing assignment outside of walk-summable models.

Second, we investigate the behavior of reweighted message-passing algorithms for the quadratic minimization problem.  The motivation for this study comes from the observation that belief propagation style algorithms typically do not explore all nodes in the factor graph with the same frequency \citep{frey}.  In many application areas such uneven counting is undesirable and typically results in incorrect answers, but if we can use reweighting to overestimate the diagonal entries of the computation tree relative to the off diagonal entries, then we may be able to force the computation trees to be positive definite at each iteration of the algorithm.  Although similar in spirit to diagonal loading, our approach preserves the distributed message-passing structure of the algorithm.  We will show that there exists a choice of parameters for the reweighted algorithms that guarantees monotone convergence of the variance estimates on all positive definite models, even those for which the GaBP algorithm fails to converge.  We empirically observe that there exists a choice of parameters that also guarantees the convergence of the mean estimates.

In addition, we show that our graph cover analysis extends to other iterative algorithms for the quadratic minimization problem, and that similar ideas can be used to reason about the min-sum algorithm for general convex minimization.

The outline of this paper is as follows: In Section \ref{sec:prelim} we review the min-sum algorithm, its reweighted generalizations, and the quadratic minimization problem.  In Section \ref{sec:gc} we discuss the relationship between pairwise message-passing algorithms and graph covers, and we show how to use graphs covers to characterize walk-summability.  In Section \ref{sec:convmsg}, we examine the convergence of the means and the variances under the reweighted algorithm for the quadratic minimization problem, we examine the relationship between the reweighted algorithm and the Gauss-Seidel and Jacobi methods, and we compare the performance of the reweighted algorithm to the standard min-sum algorithm.  Finally, in Section \ref{sec:conc}, we summarize the results and discuss extensions of this work to general convex functions  as well as open problems.  Detailed proofs of the two main theorems can be found in the Appendices \ref{ap:2cover} and \ref{ap:posdefcomp}.

\section{Preliminaries}
\label{sec:prelim}
In this section, we review the min-sum algorithm and a reweighted variant over pairwise factor graphs.  Of particular importance for later proofs will be the computation trees generated by each of these algorithms.  We also review the quadratic minimization problem, and derive the closed form message updates for this problem.

\subsection{The Min-Sum Algorithm}
\label{sec:minsum}
The min-sum algorithm attempts to compute the
minimizing assignment of an objective function $f:
\prod_i\mathcal{X}_i\rightarrow \mathbb{R}$ that, given a graph $G=(V,E)$, can be factorized as a sum of
self-potentials and edge potentials as follows: $f(x_1,...,x_n) = \sum_{i\in V}
\phi_i(x_i) + \sum_{(i,j)\in E} \psi_{ij}(x_i, x_j)$.  We assume that this minimization problem is well-defined:  $f$ is bounded from below and there exists an $x\in \prod_i\mathcal{X}_i$ that minimizes $f$.

To each factorization, we associate a bipartite graph known as the factor graph.  In general, the factor graph consists of a node for each of the variables  $x_1,..., x_n$, a node for each of the $\psi_{ij}$, and for all $(i,j)\in E$, an edge joining the  node corresponding to $x_i$ to the node corresponding to $\psi_{ij}$.  Because the $\psi_{ij}$ each depend on exactly two factors, we often omit the factor nodes from the factor graph construction and replace them with a single edge.  This reduces the factor graph to the graph $G$.  See Figure \ref{factfig} for an example of this construction.

\begin{figure}
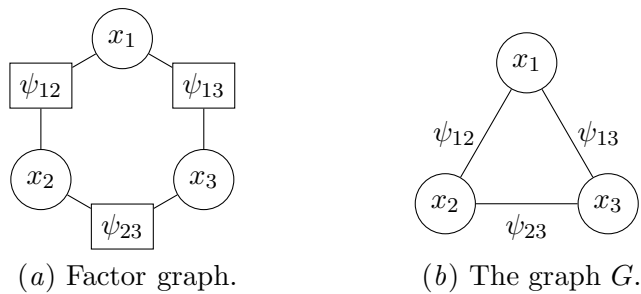

\centering
\subfigure[{Factor graph.}]{\includeteximage[]{fig_G.txt}}
\hspace{2cm}
\subfigure[The graph $G$.]{\includeteximage[]{fig_fact.txt}}
\caption[Example factor graph.]{The factor graph corresponding to $f(x_1,x_2,x_3) = \phi_1 + \phi_2 +
\phi_3 + \psi_{12} + \psi_{23} + \psi_{13}$ and the graph $G$. The functions $\phi_1$,$\phi_2$, and $\phi_3$ each depend only on variable, and are typically omitted from the factor graph representation for clarity.}
\label{factfig}
\end{figure}

We can write the min-sum algorithm as a local message-passing algorithm over the graph $G$.  During the execution of the min-sum algorithm, messages are passed back and forth between adjacent nodes of the graph.  On the $t^{th}$ iteration of the algorithm, messages are passed along each edge of the factor graph as
\begin{eqnarray*}
m^t_{i\rightarrow j}(x_j) & = & \kappa + \min_{x_i} \psi_{ij}(x_i,
x_j) + \phi_i(x_i)+\sum_{k \in \partial i \setminus j}m^{t-1}_{k\rightarrow i}(x_i),
\end{eqnarray*}
where $\partial i$ denotes the set of neighbors of variable node $x_i$ in the factor graph and $\partial j \setminus i$ is abusive notation for the set-theoretic difference $\partial j \setminus \{i\}$.  When the factor graph is a tree, these updates are guaranteed to converge, but understanding when these updates converge to the correct solution for an arbitrary graph is a central question underlying the study of the min-sum algorithm.

Each message update has an arbitrary normalization factor $\kappa$.  Because $\kappa$ is not a function of any of the variables, it only affects the value of the minimum and not where the minimum is located.  As such, we are free to choose it however we like for each message and each time step.  In practice, these constants are used to avoid numerical issues that may arise during the execution of the algorithm.  

We will think of the messages as a vector of functions indexed by the edge over which the message is passed.  Any vector of real-valued messages is a valid choice for the vector of initial messages $m^0$, and the choice of initial messages can greatly affect the behavior of the algorithm.  A typical assumption, that we will use in this work, is that the initial messages are chosen such that $m^0_{i\rightarrow j} \equiv 0$ for all $i$ and $j$.

We can use the messages in order to construct an estimate of the min-marginals of $f$.  Given any vector of messages, $m^t$, we can construct a set of beliefs that are intended to approximate the min-marginals of $f$ as
\begin{eqnarray*}
\tau^t_i(x_i) & = & \kappa + \phi_i(x_i) + \sum_{j \in \partial i} m^t_{j \rightarrow i}(x_i)\label{varb}\\
\tau^t_{ij}(x_i, x_j) & = & \kappa + \psi_{ij}(x_i, x_j) + \tau^t_j(x_j)  - m^t_{i \rightarrow j}(x_j) + \tau^t_i(x_i)  - m^t_{j \rightarrow i}(x_i).
\end{eqnarray*}

Additionaly, we can approximate the optimal assignment by computing an estimate of the argmin,
\begin{eqnarray*}
x_i^t & \in & \arg \min_{x_i} \tau^t_i(x_i).
\end{eqnarray*}

If the beliefs corresponded to the true min-marginals of $f$ (i.e., $\tau^t_i(x_i) = \min_{x' : x'_i = x_i} f(x')$), then for any $y_i\in\arg\min_{x_i} \tau^t_i(x_i)$ there exists a vector $x^*$ such that $x^*_i = y_i$ and $x^*$ minimizes the function $f$.  If $|\arg\min_{x_i} \tau^t_i(x_i)| = 1$ for all $i$, then we can take $x^* = y$, but, if the objective function has more than one optimal solution, then we may not be able to construct such an $x^*$ so easily.  

\begin{definition}
A vector, $\tau = (\{\tau_i\}, \{\tau_{ij}\})$, of beliefs is \textbf{locally decodable} to $x^*$ if $\tau_i(x_i^*) < \tau_i(x_i)$ for all $i$, $x_i\neq x_i^*$.  Equivalently, for each $i\in V$, $\tau_i$ is uniquely minimized at $x_i^*$.
\end{definition}

If the algorithm converges to a vector of beliefs that are locally decodable to $x^*$, then we hope that the vector $x^*$ is a global minimum of the objective function.  This is indeed the case when the factor graph contains no cycles \citep{wainwright} but need not be the case for arbitrary graphical models.

\subsubsection{Computation Trees}
\label{comptree}
An important tool in the analysis of the min-sum algorithm is the notion of a computation tree.  Intuitively, the computation tree is an unrolled version of the original graph that captures the evolution of the messages passed by the min-sum algorithm needed to compute the belief at time $t$ at a particular node of the factor graph.  Computation trees describe the evolution of the beliefs over time, which, in some cases, can help us prove correctness and/or convergence of the message-passing updates.  For example, the convergence of the min-sum algorithm on graphs containing a single cycle can be demonstrated by analyzing the computation trees produced by the min-sum algorithm at each time step \citep{weisslocal}.

The depth $t$ computation tree rooted at node $i$ contains all of the length $t$ non-backtracking walks in the factor graph starting at node $i$.  A walk is non-backtracking if it does not go back and forth successively between two vertices. For any node $v$ in the factor graph, the computation tree at time $t$ rooted at $i$, denoted by $T_i(t)$, is defined recursively as follows: $T_i(0)$ is just the node $i$, the root of the tree. The tree $T_i(t)$ at time $t>0$ is generated from $T_i(t-1)$ by adding to each leaf of $T_i(t -1)$ a copy of each of its neighbors in $G$ (and the corresponding edge), except for the neighbor that is already present in $T_i(t-1)$.  Each node of $T_i(t)$ is a copy of a node in $G$, and the potentials on the nodes in $T_i(t)$, which operate on a subset of the variables in $T_i(t)$, are copies of the potentials of the corresponding nodes in $G$.  The construction of a computation tree for the graph in Figure \ref{factfig} is pictured in Figure \ref{comptreefig}.  Note that each variable node in $T_i(t)$ represents a distinct copy of some variable $x_j$ in the original graph.

\begin{figure}
\centering
  \begin{tikzpicture}[scale=.8]
	\tikzstyle{every node}=[draw,shape=circle, scale=.9, minimum size = 2.5em];	
	\path (0,0) node (X0) {$x_1$};
	\path (-1,-2) node (X2) {$x_2$};	
	\path (-1,-4) node (X4) {$x'_3$};	
	\path (1,-2) node (X6) {$x_3$};	
	\path (1,-4) node (X8) {$x'_2$};			
	\tikzstyle{every node}=[draw=none];	
			
	\path (X0) edge node[label = left:$\psi_{12}$] {} (X2);	
	\path (X0) edge node[label = right:$\psi_{13}$] {} (X6);
	\path (X6) edge node[label = right:$\psi_{23}$] {} (X8);
	\path (X2) edge node[label = left:$\psi_{23}$] {} (X4);	
	\end{tikzpicture}

\caption[Example of a computation tree.]{The computation tree at time $t = 2$ rooted at the variable node $x_1$ of the graph in Figure \ref{factfig}.   The self-potentials corresponding to each variable node are given by the subscript of the variable.}
\label{comptreefig}
\end{figure}
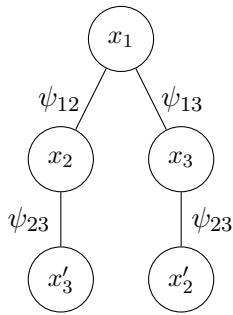

Given any initialization of the messages, $T_i(t)$ captures the information available to node $i$ at time $t$. At time $t=0$, node $i$ has received only the initial messages from its neighbors, so $T_i(0)$ consists only of $i$. At time $t=1$, $i$ receives the round one messages from all of its neighbors, so $i$'s neighbors are added to the tree. These round one messages depend only on the initial messages, so the tree terminates at this point.  By construction, we have the following lemma.

\begin{lemma} 
The belief at node $i$ produced by the min-sum algorithm at time $t$ corresponds to the exact min-marginal at the root of $T_i(t)$ whose boundary messages are given by the initial messages.
\end{lemma}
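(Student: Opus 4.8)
The plan is to prove the lemma by induction on $t$, after strengthening it to a statement about individual messages. For a directed edge $j\to i$ of $G$ and $s\geq 0$, let $T_{j\to i}(s)$ be the tree rooted at (a copy of) $j$ obtained by the same non-backtracking unrolling used to build $T_i(s)$, but with $i$ forbidden as the first step; equivalently, $T_{j\to i}(s)$ is the subtree of $T_i(s+1)$ hanging below the root $i$ through the edge to $j$. Two structural facts follow immediately from the recursive definition of the computation tree: $T_i(t)$ is the root $i$ together with a disjoint copy of $T_{j\to i}(t-1)$ for each $j\in\partial i$; and $T_{j\to i}(s)$ is the root $j$ together with a disjoint copy of $T_{k\to j}(s-1)$ for each $k\in\partial j\setminus i$ --- here the set difference $\partial j\setminus i$ is exactly the non-backtracking restriction, and it mirrors the set difference in the min-sum update. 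The strengthened claim is then: for every directed edge $j\to i$ and every $t\geq 1$, the message $m^t_{j\to i}(x_i)$ equals, up to an additive constant (which the normalization $\kappa$ is free to absorb), the exact conditional min-marginal of $x_i$ obtained by summing $\psi_{ij}(x_i,x_j)$ with all self- and edge-potentials appearing in $T_{j\to i}(t-1)$ and with the initial message $m^0$ on each ``missing'' edge at a leaf of $T_{j\to i}(t-1)$, and then minimizing over all variables of $T_{j\to i}(t-1)$. Granting this claim, summing over $j\in\partial i$ and adding $\phi_i(x_i)$ turns the left side into $\tau^t_i(x_i)$ (by the definition of the belief) and the right side into the exact min-marginal at the root of $T_i(t)$ with boundary messages $m^0$ (by the first structural fact and the fact that, on a tree, the minimum of a sum of potentials conditioned on the root value decomposes over the branches through distinct neighbors, which are variable-disjoint). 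This is the lemma.

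For the base case $t=1$: by the structural facts $T_{j\to i}(0)$ is just the node $j$ with the edges to its neighbors other than $i$ as leaf edges, so the described min-marginal is $\min_{x_j}\bigl[\psi_{ij}(x_i,x_j)+\phi_j(x_j)+\sum_{k\in\partial j\setminus i} m^0_{k\to j}(x_j)\bigr]$, which is exactly the defining update for $m^1_{j\to i}(x_i)$ up to $\kappa$. (If one prefers $t=0$ as the base case, $T_i(0)$ is the single node $i$ and its min-marginal is $\phi_i(x_i)$, matching $\tau^0_i(x_i)=\kappa+\phi_i(x_i)+\sum_{j\in\partial i}m^0_{j\to i}(x_i)$ once the incoming initial messages are read as the boundary terms at the single leaf.)

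For the inductive step, suppose the strengthened claim holds at time $t-1$, so that for every edge $k\to j$ the quantity $m^{t-1}_{k\to j}(x_j)$ equals, up to a constant, the min over $T_{k\to j}(t-2)$ of the corresponding sum of potentials with boundary messages $m^0$. By the second structural fact, $T_{j\to i}(t-1)$ is the root $j$ with a copy of $T_{k\to j}(t-2)$ attached through the edge $jk$ for each $k\in\partial j\setminus i$; conditioning on the value $x_j$, these copies are variable-disjoint, so the minimum over $T_{j\to i}(t-1)$ of the total sum of potentials factors as $\min_{x_j}\bigl[\psi_{ij}(x_i,x_j)+\phi_j(x_j)+\sum_{k\in\partial j\setminus i}(\text{min over }T_{k\to j}(t-2)\text{ given }x_j)\bigr]$. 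Applying the inductive hypothesis to each inner minimum replaces it by $m^{t-1}_{k\to j}(x_j)$ up to a constant, and the resulting expression is precisely the min-sum update defining $m^t_{j\to i}(x_i)$, again up to a constant absorbed by $\kappa$. This closes the induction and proves the lemma.

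I do not expect a real mathematical obstacle: the content is a structural induction, and the only steps requiring care are bookkeeping ones --- pinning down precisely what ``boundary messages given by the initial messages at the leaves'' means, keeping the indices $T_{j\to i}(t-1)$ versus $T_{j\to i}(t)$ straight, and checking that the various normalization constants $\kappa$ introduced along the way only ever shift the root belief by an additive constant and hence do not affect the reported min-marginal (up to a constant) or its argmin. The single subtle point is matching the ``add every neighbor except the one already present'' rule in the computation-tree construction to the $\partial j\setminus i$ in the update rule; making the directed trees $T_{j\to i}(s)$ the primary object rather than the undirected $T_i(s)$ is what keeps that correspondence transparent.
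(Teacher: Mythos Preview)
Your proof is correct and is the standard structural induction: strengthen the statement to one about individual messages $m^t_{j\to i}$ via directed subtrees $T_{j\to i}(s)$, match the non-backtracking rule in the tree construction to the $\partial j\setminus i$ in the update, and use variable-disjointness of the branches to factor the minimum. The paper does not give its own argument at all---it simply cites \citet{tatjor02} and \citet{weisscomp}---so your write-up supplies what the paper omits. The argument you give is essentially the one found in those references, so there is no meaningful methodological difference to compare.
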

\begin{proof}
See, for example, \citet{tatjor02, weisscomp}.
\end{proof}

Computation trees provide us with a dynamic view of the min-sum algorithm. After a finite number of time steps, we hope that the beliefs on the computation
trees stop changing and that the message vector converges to a fixed point of the message update equations (in practice, when the beliefs change by less than some
small amount, we say that the algorithm has converged). For any real-valued objective function $f$ (i.e., $|f(x)| < \infty$ for all $x$), there always exists a fixed point of the message update equations (see Theorem 2 of \citet{wainwright}).

\subsection{Reweighted Message-Passing Algorithms}
\label{sec:split}
Because the min-sum algorithm is not guaranteed to converge and, even it does, is not guaranteed to compute the correct minimizing assignment, research has focused on the design of alternative message-passing schemes that do not suffer from these drawbacks.  Efforts to produce provably convergent message-passing schemes have resulted in the rewighted message-passing algorithm described in Algorithm \ref{alg:pairs}.  Notice that if we set $c_{ij} = 1$ for all $i$ and $j$, then we obtain the standard min-sum algorithm.  In \citet{waintree}, the $c_{ij}$ are chosen in a specific way in order to guarantee correctness of the algorithm (which they call TRMP in this special case).  In this work, we will focus on choices of these weights that will guarantee convergence of the algorithm for the quadratic minimization problem.  These choices will, surprisingly, not coincide with those of the TRMP algorithm.  In fact, the choice of weights that guarantees correctness of the TRMP algorithm must necessarily cause the algorithm to either not converge or converge to the incorrect solution whenever the given matrix is not walk-summable.  
\begin{algorithm}[t]
\caption{Synchronous Reweighted Message-Passing Algorithm\label{alg:pairs}}
\begin{algorithmic}[1]
\STATE Initialize the messages to some finite vector.

\STATE For iteration $t = 1,2,...$ update the the messages as follows:
\begin{align*}
m^t_{i\rightarrow j}(x_j) \coloneqq & \kappa + \min_{x_i} \frac{\psi_{ij}(x_i,
x_j)}{c_{ij}} + \phi_i(x_i) + (c_{ij}-1)m^{t-1}_{j\rightarrow i}(x_i) + \sum_{k \in \partial i \setminus j}c_{ki} m^{t-1}_{k\rightarrow i}(x_i).
\end{align*}
\end{algorithmic}
\end{algorithm}

The beliefs for this algorithm are defined analogously to those for the standard min-sum algorithm.
\begin{eqnarray*}
\tau^t_i(x_i) & = & \kappa + \phi_i(x_i) + \sum_{j \in \partial i} c_{ji}m^t_{j \rightarrow i}(x_i)\\
\tau^t_{ij}(x_i, x_j) & = & \kappa + \frac{\psi_{ij}(x_i, x_j)}{c_{ij}} + \tau^t_j(x_j)  - m^t_{i \rightarrow j}(x_j) + \tau^t_i(x_i)  - m^t_{j \rightarrow i}(x_i)
\end{eqnarray*}

The vector of messages at any fixed point of the message update equations has two important properties.  First, the beliefs corresponding to these messages provide an alternative factorization of the objective function $f$.  Second, the beliefs correspond to approximate marginals.  

\begin{lemma}
For any $m^t\in\mathbb{R}^{2|E|}$, 
\[f(x_1,\ldots,x_{|V|}) = \kappa + \sum_{i\in V} \tau_i(x_i) + \Big[\sum_{(i,j)\in E} \tau_{ij}(x_i,x_j) - \tau_i(x_i) - \tau_j(x_j)\Big].\]
\label{lem:admis}
\end{lemma}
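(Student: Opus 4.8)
The plan is to verify the decomposition by substituting the definitions of the beliefs into the right-hand side and simplifying. The structural fact that makes this quick is that $\tau_{ij}$ is \emph{defined} as $\tau_i + \tau_j$ plus a correction, so the bracketed quantity $\tau_{ij}(x_i,x_j) - \tau_i(x_i) - \tau_j(x_j)$ collapses at once: the copies of $\tau_i(x_i)$ and $\tau_j(x_j)$ sitting inside $\tau_{ij}$ cancel the two subtracted vertex beliefs, leaving only (a constant plus) the edge potential and the two messages $m_{i\to j}, m_{j\to i}$ carried by the edge $(i,j)$.

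I would then substitute this, together with $\tau_i(x_i) = \kappa + \phi_i(x_i) + \sum_{k\in\partial i} c_{ki} m_{k\to i}(x_i)$, into the full right-hand side and regroup the result by type of term. The self-potential terms reassemble $\sum_{i\in V}\phi_i(x_i)$; the edge-potential terms recombine --- with the reweighting $c_{ij}$ taken into account --- to give $\sum_{(i,j)\in E}\psi_{ij}(x_i,x_j)$; and, the heart of the matter, the message terms all cancel. Here the reweighting coefficients do the work: a given directed message $m_{k\to i}$ enters $\sum_{i'}\tau_{i'}$ through $\tau_i$ with one weight and enters the bracketed edge sum through the $(k,i)$ term with the opposite contribution, and the $c_{ij}$'s are arranged precisely so that the two annihilate. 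What survives is $\sum_{i\in V}\phi_i(x_i) + \sum_{(i,j)\in E}\psi_{ij}(x_i,x_j) = f(x_1,\dots,x_{|V|})$, plus an $x$-independent constant which is absorbed into the single $\kappa$ on the left. No fixed-point property of $m^t$ is used anywhere, so the identity holds for every $m^t\in\mathbb{R}^{2|E|}$, as claimed.

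The only real work is the bookkeeping: one must keep careful track of which message occurs in which belief and with which of the weights $c_{ij}$ and $c_{ji}$, and one must remember that the several occurrences of the normalizer $\kappa$ are independent constants that need only combine into one overall additive constant. Running the computation first on the single-edge graph $G=(\{1,2\},\{\{1,2\}\})$ already exhibits the entire cancellation pattern, after which the general case is the same term-by-term expansion summed over $V$ and $E$; I expect no conceptual obstacle beyond this accounting.
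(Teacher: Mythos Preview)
Your approach --- substitute the belief definitions and track the cancellations --- is exactly what the paper has in mind (it dismisses the proof as ``a straightforward exercise in applying the definitions'').  The gap is that you assert the crucial cancellation without doing it, and in fact it does \emph{not} go through for the reweighted algorithm as the identity is written.

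Run the single-edge check you propose.  With $V=\{1,2\}$, $E=\{(1,2)\}$, the definitions give
\[
\tau_{12}-\tau_1-\tau_2 \;=\; \kappa + \frac{\psi_{12}}{c_{12}} - m_{1\to 2}(x_2) - m_{2\to 1}(x_1),
\]
so the right-hand side of the lemma becomes, up to constants,
\[
\phi_1+\phi_2+\frac{\psi_{12}}{c_{12}} + (c_{12}-1)\,m_{1\to 2}(x_2) + (c_{21}-1)\,m_{2\to 1}(x_1).
\]
The message $m_{k\to i}$ enters $\tau_i$ with coefficient $c_{ki}$ but enters the bracket $\tau_{ki}-\tau_k-\tau_i$ with coefficient $-1$, so the net contribution is $(c_{ki}-1)m_{k\to i}$, not zero; and the edge potential comes out as $\psi_{ij}/c_{ij}$, not $\psi_{ij}$.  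Hence the identity as stated is valid only in the unreweighted case $c_{ij}\equiv 1$.  For general $c$ (with $c_{ij}=c_{ji}$), the admissibility identity that actually balances is
\[
f(x)=\kappa+\sum_{i\in V}\tau_i(x_i)+\sum_{(i,j)\in E} c_{ij}\bigl[\tau_{ij}(x_i,x_j)-\tau_i(x_i)-\tau_j(x_j)\bigr],
\]
which is the form familiar from the tree-reweighted literature.  Your plan is fine; the bookkeeping you flagged as ``the only real work'' is precisely where the discrepancy shows up, and it indicates that the statement in the paper is missing the $c_{ij}$ weight on the edge terms rather than that your method is wrong.
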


\begin{lemma}
If $\tau$ is a set of beliefs corresponding to a fixed point of the message updates in Algorithm \ref{alg:pairs}, then
\[\min_{x_j} \tau_{ij}(x_i,x_j) = \kappa + \tau_i(x_i)\]
for all $(i,j)\in G$ and all $x_i$.
\label{lem:consis}
\end{lemma}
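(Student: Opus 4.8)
The plan is to verify the identity by direct substitution of the belief definitions, recognizing that the fixed-point message update equation appears as the inner minimization. First I would expand $\tau_{ij}(x_i,x_j)$ according to its definition and pull out everything that does not depend on $x_j$. Since $\tau_i(x_i)$ and $m_{j\rightarrow i}(x_i)$ are constant in $x_j$, this gives
\begin{align*}
\min_{x_j}\tau_{ij}(x_i,x_j) = \kappa + \tau_i(x_i) - m_{j\rightarrow i}(x_i) + \min_{x_j}\Big[\tfrac{\psi_{ij}(x_i,x_j)}{c_{ij}} + \tau_j(x_j) - m_{i\rightarrow j}(x_j)\Big].
\end{align*}

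Next I would rewrite the inner bracket using $\tau_j(x_j) = \kappa + \phi_j(x_j) + \sum_{k\in\partial j} c_{kj} m_{k\rightarrow j}(x_j)$. Splitting the $k=i$ term off the sum converts $\tau_j(x_j) - m_{i\rightarrow j}(x_j)$ into $\phi_j(x_j) + (c_{ij}-1)m_{i\rightarrow j}(x_j) + \sum_{k\in\partial j\setminus i} c_{kj} m_{k\rightarrow j}(x_j)$, up to an additive constant. Using the symmetry of the edge potentials and per-edge weights ($\psi_{ij}=\psi_{ji}$, $c_{ij}=c_{ji}$), the resulting minimization over $x_j$ is exactly the right-hand side of the update rule in Algorithm \ref{alg:pairs} for the message $m_{j\rightarrow i}(x_i)$. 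Because $\tau$ corresponds to a fixed point, $m^{t}_{j\rightarrow i} = m^{t-1}_{j\rightarrow i} = m_{j\rightarrow i}$, so this minimization equals $m_{j\rightarrow i}(x_i)$ up to an additive constant; cancelling it against the $-m_{j\rightarrow i}(x_i)$ term leaves $\kappa + \tau_i(x_i)$, which is the claim.

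I do not expect a genuine obstacle here: the content of the lemma is precisely that the reweighted fixed-point equations are, by construction, a reparametrization in which each pairwise belief is min-consistent with its endpoint belief, and the argument above is the direct unfolding of that fact. The only point demanding care is the bookkeeping of the arbitrary normalization constants $\kappa$, each occurrence of which denotes a (generally different) quantity independent of the variables; I would either fix a normalization convention at the start or simply track that all variable-independent terms collapse into the single $\kappa$ appearing in the statement.
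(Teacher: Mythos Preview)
Your proposal is correct and follows exactly the approach the paper intends: the paper states that the proof is ``a straightforward exercise in applying the definitions,'' and your direct expansion of $\tau_{ij}$, substitution of $\tau_j$, and identification of the resulting minimization with the fixed-point update for $m_{j\rightarrow i}$ is precisely that exercise. Your remark about tracking the floating normalization constants $\kappa$ is the only real bookkeeping point, and you have handled it appropriately.
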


The proof of these two lemmas is a straightforward exercise in applying the definitions.  Similar results for the special case of the max-product algorithm can be found in \citet{wainwright}.

\subsubsection{Computation Trees}
\begin{figure}
\centering
  \begin{tikzpicture}[xscale=1.5, yscale = .9]
	\tikzstyle{every node}=[draw,shape=circle, scale=1, minimum size = 2.5em];	
	\path (0,0) node[label=above:$\phi_1$] (X0) {$x_1$};
	\path (-2,-2) node[label=left:$c_{12}\phi_2$] (X2) {$x_2$};	
	\path (-1,-4) node[label=left:$c_{12}c_{23}\phi_3$] (X4) {$x'_3$};	
	\path (2,-2) node[label=right:$c_{13}\phi_3$] (X6) {$x_3$};	
	\path (1,-4) node[label=right:$c_{13}c_{23}\phi_2$] (X8) {$x'_2$};			

	\path (-3,-4) node[label=left:$(c_{12}^2-c_{12})\phi_1$] (X10) {$x'_1$};	
	\path (3,-4) node[label=right:$(c_{13}^2-c_{13})\phi_1$] (X12) {$x''_1$};	

	\tikzstyle{every node}=[draw=none];	
			
	\path (X0) edge node[label = left:$\psi_{12}$] {} (X2);	
	\path (X0) edge node[label = right:$\psi_{13}$] {} (X6);
	\path (X2)  edge node[label = right:$\frac{\psi_{23}c_{12}}{c_{13}}$] {} (X4);
	\path (X2)  edge node[label = left:$\frac{\psi_{12}(c_{12}-1)}{c_{12}}$] {} (X10);
	\path (X6)  edge node[label = left:$\frac{\psi_{23}c_{13}}{c_{23}}$] {} (X8);
	\path (X6)  edge node[label = right:$\frac{\psi_{13}(c_{13}-1)}{c_{13}}$] {} (X12);
	\end{tikzpicture}
\caption[Computation trees produced by Algorithm \ref{alg:pairs}.]{Construction of the computation tree rooted at node $x_1$ at time $t = 2$ produced by Algorithm \ref{alg:pairs} for the factor graph in Figure
\ref{factfig}.  Self-potentials are adjacent to the variable node to which they correspond. One can check that setting $c_{ij} = 1$ for all $(i,j)\in E$ reduces the above computation tree to that of Figure \ref{comptreefig}. }
\label{newcomptreefig}
\end{figure}
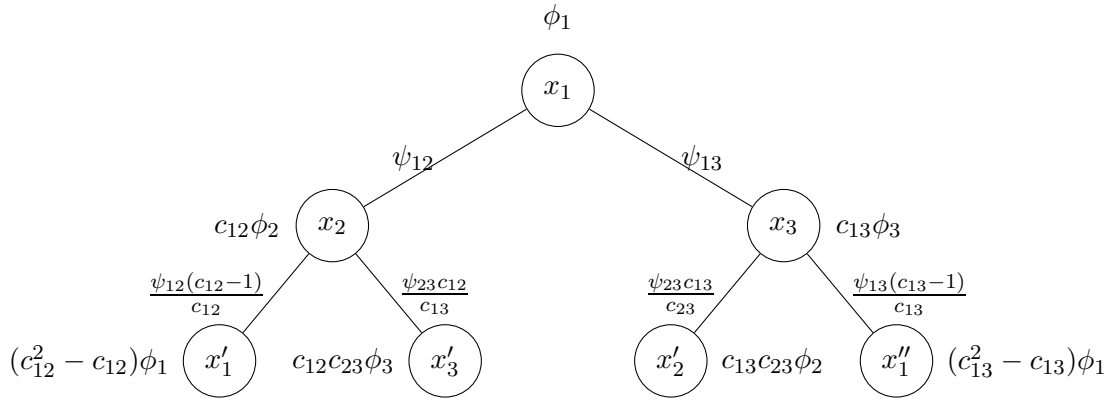

\label{newcompsec}
The computation trees produced by Algorithm \ref{alg:pairs} are
different from their predecessors.  Again, the computation tree captures the
messages that would need to be passed in order to compute $\tau^t_i(x_i)$. 
However, the messages that are passed in the new algorithm are multiplied by a
non-zero constant.  As a result, the potential at a node $u$ in the computation
tree corresponds to some potential in the original graph multiplied by a
constant that depends on all of the nodes above $u$ in the computation tree.  We
summarize the changes as follows:
\begin{enumerate}
\item The message passed from $i$ to $j$ may now depend on the message from
$j$ to $i$ at the previous time step.  As such, we now form the time $t+1$
computation tree from the time $t$ computation tree by taking any leaf $u$,
which is a copy of node $v$ in the factor graph, of the time $t$ computation
tree, creating a new node for every $w\in \partial v$, and connecting $u$ to
these new nodes.  As a result, the new computation tree rooted at node $u$ of
depth $t$ contains at least all of the non-backtracking walks of length $t$ in
the factor graph starting from $u$ and, at most, all walks of length $t$ in the
factor graph starting at $u$.
\item The messages are weighted by the elements of $c$.  This changes the
potentials at the nodes in the computation tree.  For example, suppose the
computation tree was rooted at variable node $i$ and that $\tau_i$ depends on the
message from $j$ to $i$.  Because $m_{ji}$ is multiplied
by $c_{ij}$ in $\tau_i$, every potential along this branch of the computation
tree is multiplied by $c_{ij}$.  To make this concrete, we can associate a
weight to every edge of the computation tree that corresponds to the constant
that multiplies the message passed across that edge.  To compute the new
potential at a variable node $i$ in the computation tree, we now need to
multiply the corresponding potential $\phi_i$ by each of the weights
corresponding to the edges that appear along the path from $i$ to the root of
the computation tree.  An analogous process can be used to compute the
potentials on each of the edges.  The computation tree produced by Algorithm \ref{alg:pairs} at time $t = 2$ for the factor graph in Figure \ref{factfig}
is pictured in Figure \ref{newcomptreefig}.  Compare this with computation tree
produced by the standard min-sum algorithm in Figure \ref{comptreefig}.
\end{enumerate}

If we make these adjustments, then the belief, $\tau_i^t(x_i)$, at node $i$ at time $t$ is given by the min-marginal at the root of $T_i(t)$.  In this way, the beliefs correspond to marginals at the
root of these computation trees.  
\subsection{Quadratic Minimization}

We now address the quadratic minimization problem in the context of the reweighted min-sum algorithm.  Recall that given a matrix $\Gamma$ the quadratic minimization problem is to find the vector $x$ that minimizes $f(x) = \frac{1}{2}x^T\Gamma x - h^Tx$.  Without loss of generality, we can assume that the matrix $\Gamma$ is symmetric as the quadratic function $\frac{1}{2}x^T\Gamma x - h^Tx$ is equivalent to $\frac{1}{2}x^T\Big[ \frac{1}{2}(\Gamma + \Gamma^T)\Big]x - h^Tx$ for any $\Gamma\in \mathbb{R}^{n\times n}$:
\begin{eqnarray*}
f(x) & = & \frac{1}{2}x^T\Big[ \frac{1}{2}(\Gamma + \Gamma^T) + \frac{1}{2}(\Gamma - \Gamma^T) \Big]x - h^Tx\\
& = & \frac{1}{2}x^T\Big[ \frac{1}{2}(\Gamma + \Gamma^T)\Big]x +\frac{1}{2}x^T\Big[ \frac{1}{2}(\Gamma - \Gamma^T)\Big]x - h^Tx\\
& = & \frac{1}{2}x^T\Big[ \frac{1}{2}(\Gamma + \Gamma^T)\Big]x - h^Tx.
\end{eqnarray*}

Every quadratic function admits a pairwise factorization
\begin{eqnarray*}
f(x_1,...,x_n) & = & \frac{1}{2}x^T\Gamma x - h^Tx\\
& = & \sum_i [\frac{1}{2}\Gamma_{ii}x_i^2 - h_ix_i]+ \sum_{i>j} \Gamma_{ij}x_ix_j,
\end{eqnarray*}
where $\Gamma\in \mathbb{R}^{n\times n}$ is a symmetric matrix. We note that we will abusively write $\min$ in the reweighted update equations even though the appropriate notion of minimization for the real numbers is $\inf$.   

We can explicitly compute the minimization required by the reweighted min-sum algorithm at each time step: the synchronous message update $m_{i\rightarrow j}^t(x_j)$ can be parameterized as a quadratic function of the form $\frac{1}{2}a_{i\rightarrow j}^t x_j^2 + b_{i\rightarrow j}^tx_j$.  If we define
\[A^t_{i\setminus j} \triangleq \Big[\Gamma_{ii} + \sum_{k\in \partial i} c_{ki}\cdot a_{k\rightarrow i}^{t-1}\Big] - a_{j\rightarrow i}^{t-1}\]
and
\[B^t_{i\setminus j} \triangleq \Big[h_i - \sum_{k\in \partial i} c_{ki}\cdot b_{k\rightarrow i}^{t-1}\Big] - b^{t-1}_{j\rightarrow i},\]
then the updates at time $t$ are given by
\begin{align*}
a_{i\rightarrow j}^t  \coloneqq & \frac{-\Big(\frac{\Gamma_{ij}}{c_{ij}}\Big)^2}{A^t_{i\setminus j}}\\
b_{i\rightarrow j}^t  \coloneqq & \frac{B^t_{i\setminus j} \frac{\Gamma_{ij}}{c_{ij}}}{A^t_{i\setminus j}}\hspace{.1cm}
\end{align*}
These updates are only valid when $A_{i\setminus j} > 0$.  If this is not the case, then the minimization given in Algorithm \ref{alg:pairs} is not bounded from below, and we set $a_{i\rightarrow j}^t = -\infty$.  For the initial messages, we set $a_{i\rightarrow j}^0 = b_{ij}^0 = 0$.  A similar analysis holds for the asynchronous updates.

Suppose that the beliefs generated from a fixed point of Algorithm \ref{alg:pairs} are locally decodable to $x^*$.  One can show that the gradient of $f$ at $x^*$ is always equal to zero.  If the gradient of $f$ at $x^*$ is zero and $\Gamma$ is positive definite, then $x^*$ must be a global minimum of $f$.  In other words, the min-sum algorithm always computes the correct minimizing assignment if it converges to locally decodable beliefs.  This result was proven for the GaBP algorithm in \citet{weissgauss} and the tree-reweighted algorithm in \citet{waingauss}.

\begin{theorem}
If Algorithm \ref{alg:pairs} converges to a collection of beliefs, $\tau$, that are locally decodable to $x^*$ for a quadratic function $f$, then $x^*$ is a local minimum of $f$.\label{thm:corr}
\end{theorem}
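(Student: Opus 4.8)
The plan is to reduce everything to showing $\nabla f(x^{\ast})=0$ and then invoke convexity. Since Algorithm~\ref{alg:pairs} converges, the messages settle at a \emph{finite} fixed point, so in the quadratic parametrization each $m_{i\to j}$ is a genuine quadratic $\tfrac12 a_{i\to j}x_j^2+b_{i\to j}x_j$ with finite coefficients; consequently every belief $\tau_i$ is a univariate quadratic and every $\tau_{ij}$ is a bivariate quadratic. Local decodability says each $\tau_i$ is uniquely minimized at $x_i^{\ast}$, and a univariate quadratic with a unique minimizer is strictly convex, so its leading coefficient is positive and $\tau_i'(x_i^{\ast})=0$.

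Next I would pin down the pairwise beliefs. Applying Lemma~\ref{lem:consis} to the ordered pairs $(i,j)$ and $(j,i)$ gives $\min_{x_j}\tau_{ij}(x_i,x_j)=\kappa+\tau_i(x_i)$ for all $x_i$ and $\min_{x_i}\tau_{ij}(x_i,x_j)=\kappa+\tau_j(x_j)$ for all $x_j$. Since the right-hand sides are everywhere-finite quadratics, each one-variable partial minimization of the quadratic $\tau_{ij}$ is finite for all values of the other variable; this forces both diagonal entries of $\nabla^2\tau_{ij}$ to be strictly positive (a zero coefficient would make $\tau_{ij}$ independent of that variable, and then Lemma~\ref{lem:consis}, read in the other direction, would make $\tau_i$ or $\tau_j$ constant, contradicting the previous paragraph), and completing the square identifies the leading coefficient of the partial-minimum quadratic with the already-positive leading coefficient of $\tau_i$, respectively $\tau_j$, so $\det\nabla^2\tau_{ij}>0$. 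Hence $\tau_{ij}$ is a strictly convex quadratic with a unique minimizer; and since $\min_{x_i,x_j}\tau_{ij}=\min_{x_i}\bigl(\kappa+\tau_i(x_i)\bigr)$ is attained only at $x_i=x_i^{\ast}$ and, symmetrically, only at $x_j=x_j^{\ast}$, that minimizer is $(x_i^{\ast},x_j^{\ast})$. In particular $\nabla\tau_{ij}(x_i^{\ast},x_j^{\ast})=0$ for every edge.

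Finally I would assemble these facts through the reparametrization. By Lemma~\ref{lem:admis}, $f$ equals, up to an additive constant, a fixed linear combination of the beliefs $\{\tau_i\}$ and $\{\tau_{ij}\}$ with constant coefficients, so $\nabla f(x)=\sum_i\lambda_i\nabla\tau_i(x_i)+\sum_{(i,j)\in E}\mu_{ij}\nabla\tau_{ij}(x_i,x_j)$. Evaluating at $x^{\ast}$, every summand vanishes by the two preceding paragraphs, so $\nabla f(x^{\ast})=0$, i.e.\ $\Gamma x^{\ast}=h$. Since $\Gamma$ is positive definite, $f$ is strictly convex, and a stationary point of a strictly convex function is its unique global minimizer; in particular $x^{\ast}$ is a local minimum of $f$. (Even if one only assumed $\Gamma\succeq 0$, a stationary point would still be a global, hence local, minimum.)

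The step I expect to be the crux is the second paragraph: local decodability is a hypothesis purely about the single-variable beliefs, so one must bootstrap from it — using the fixed-point consistency relations of Lemma~\ref{lem:consis} together with the quadratic structure — up to strict convexity of every pairwise belief and the exact location of its minimizer. The quadratic structure is exactly what converts ``the partial minimum is a finite quadratic'' into ``the relevant Hessian entry is strictly positive'' and what forces the minimizing set of $\tau_{ij}$ to be the single point $(x_i^{\ast},x_j^{\ast})$; once that is in hand, differentiating the admissibility identity and invoking convexity are routine.
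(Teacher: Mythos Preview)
Your proof is correct and follows essentially the same route as the paper: use local decodability to get that each $\tau_i$ is a strictly convex quadratic minimized at $x_i^{\ast}$, invoke Lemma~\ref{lem:consis} to conclude each $\tau_{ij}$ is a positive definite quadratic minimized at $(x_i^{\ast},x_j^{\ast})$, differentiate the reparametrization identity of Lemma~\ref{lem:admis} to obtain $\nabla f(x^{\ast})=0$, and finish via (semi)definiteness of $\Gamma$. The only difference is one of detail: the paper treats the positive definiteness of $\tau_{ij}$ as an immediate consequence of Lemma~\ref{lem:consis}, whereas you spell out the Schur-complement argument (finite partial minimum forces positive diagonal, and matching the leading coefficient with that of $\tau_i$ forces positive determinant); your added care here is warranted and does not depart from the paper's strategy.
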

\begin{proof}
For completeness, we sketch the proof.  By Lemmas \ref{lem:admis} and \ref{lem:consis}, we have that, 
\[\min_{x_j} \tau_{ij}(x_i,x_j) = \kappa + \tau_i(x_i)\]
for all $(i,j)\in G$ and
\[f(x_1,\ldots,x_{|V|}) = \kappa + \sum_{i\in V} \tau_i(x_i) + \Big[\sum_{(i,j)\in E} \tau_{ij}(x_i,x_j) - \tau_i(x_i) - \tau_j(x_j)\Big].\]

If $\tau$ is locally decodable to $x^*$, then for each $i\in V$, $\tau_i(x_i)$ must be a positive definite quadratic function that is minimized at $x_i^*$.  Applying Lemma \ref{lem:consis}, we have that for each $(i,j)\in E$, $\tau_{ij}$ is also a positive definite quadratic function and $\tau_{ij}$ is minimized at $(x^*_i, x^*_j)$.  For each $i\in V$,
\[\frac{d}{dx_i}f(x_1,\ldots,x_{|V|}) = \frac{d}{dx_i}\tau_i(x_i) + \Big[\sum_{j\in\partial i} \frac{d}{dx_i}\tau_{ij}(x_i,x_j) - \frac{d}{dx_i}\tau_i(x_i)\Big].\]
By the above arguments, for each $i\in V$, $\frac{d}{dx_i}\tau_{i}(x_i)\Big|_{x^*} = 0$ and $\frac{d}{dx_i}\tau_{ij}(x_i,x_j)\Big|_{x^*} = 0$ for all $j\in\partial i$.  As a result, we must have $\nabla f(x^*_1,\ldots,x^*_{|V|}) = 0$.  If $\Gamma$ is positive semidefinite, then $f$ is convex, and $x^*$ must be a global minimum of $f$.
\end{proof}

As a consequence of Theorem \ref{thm:corr}, even if $\Gamma$ not positive definite, if some fixed point of the reweighted algorithm is locally decodable to a vector $x^*$ then, $x^*$ solves the system $\Gamma x = h$.

Recall that several authors have provided sufficient conditions for the convergence of GaBP:  \citet{weissgauss} demonstrated that GaBP converges in the case that the covariance matrix is diagonally dominant, \citet{malioutov} proved that the GaBP algorithm converges when the covariance matrix is walk-summable. \citet{quadciamac,convexciamac} showed that scaled diagonal dominance was a sufficient condition for convergence and also characterized the rate of convergence via a computation tree analysis.  These properties of matrices will be important in the sequel. 

\begin{definition}
$\Gamma\in \mathbb{R}^{n\times n}$ is \textbf{scaled diagonally dominant} if $\exists w>0 \in \mathbb{R}^{n}$ such that $|\Gamma_{ii}|w_i > \sum_{j\neq i} |\Gamma_{ij}|w_j$.
\end{definition}

\begin{definition}
$\Gamma\in \mathbb{R}^{n\times n}$ is \textbf{walk-summable} if the spectral radius $\varrho(|I - D^{-1/2}\Gamma D^{-1/2}|) < 1$.  Here, $D^{-1/2}$ is the diagonal matrix such that $D^{-1/2}_{ii} = \frac{1}{\sqrt{\Gamma_{ii}}}$, and $|A|$ denotes the matrix obtained from the matrix $A$ by taking the absolute value of each entry of $A$.
\end{definition}

\section{Graph Covers}
\label{sec:gc}
In this section, we will explore graph covers and their relationship to iterative message-passing algorithms for the quadratic minimization problem.  Before addressing the quadratic minimization problem specifically, we will first make a few observations about general pairwise graphical models.  The greatest strength of the above message-passing algorithms, their reliance on only local information, can also be a weakness:  local message-passing algorithms are incapable of distinguishing two graphs that have the same local structure.  To make this precise, we will need the notion of graph covers.

\begin{definition}
A graph $H$ \textbf{covers} a graph $G$ if there exists a graph homomorphism $\pi:
H \rightarrow G$ such that $h$ is an isomorphism on neighborhoods (i.e., for all vertices $i\in H$, $\partial i$ is mapped bijectively onto $\partial \pi(i)$).  If $\pi(i) = j$, then we say that $i\in H$ is a copy
of $j\in G$.  Further, $H$ is a $k$-cover of $G$ if every vertex of $G$ has
exactly $k$ copies in $H$.
\end{definition}

\begin{figure}
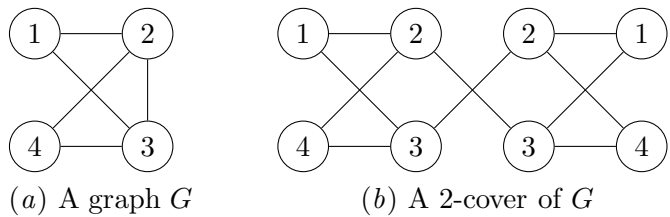

\centering
 \subfigure[A graph $G$]{\includeteximage[]{fig_1.txt}  }\hspace{1cm}
 \subfigure[A 2-cover of $G$]{\includeteximage[]{fig_2.txt}}                 
\caption[An example of a graph cover.]{An example of a graph cover.  Nodes in the cover are labeled by the node that they are a copy of in $G$.}
\label{fig:gcoverex}
\end{figure}

Graph covers, in the context of graphical models, were originally studied in relation to local message-passing algorithms for coding problems \citep{vontobel}.  Graph covers may be connected (i.e., there is a path between every pair of vertices) or disconnected.  However, when a graph cover is disconnected, all of the connected components of the cover must themselves be covers of the original graph.  For a simple example of a connected graph cover, see Figure \ref{fig:gcoverex}.  

Every finite cover of a connected graph is a $k$-cover for some integer $k$.  For every base graph $G$, there exists a graph, possibly infinite, which covers all finite, connected covers of the base graph.  This graph is known as the universal cover. 

To any finite cover, $H$, of a factor graph $G$ we can associate a collection of potentials derived from the base graph; the
potential at node $i\in H$ is equal to the potential at node $h(i) \in G$.  Together, these potential functions define a new objective function for the factor graph $H$.  In the sequel, we will use superscripts to specify that a particular object is over the factor graph $H$.  For example, we will denote the objective function corresponding to a factor graph $H$ as $f^H$, and we will write $f^G$ for the objective function $f$.

Local message-passing algorithms such as the reweighted min-sum  algorithm are incapable of distinguishing the two factor graphs $H$ and $G$ given that the initial messages to and from each node in $H$ are identical
to the nodes that they cover in $G$: for every node $i\in G$ the
messages received and sent by this node at time $t$ are exactly the same as the
messages sent and received at time $t$ by any copy of $i$ in $H$.  As a result,
if we use a local message-passing algorithm to deduce an assignment for $i$, then the algorithm
run on the graph $H$ must deduce the same assignment for each copy of $i$.

Now, consider an objective function $f$ that factors over the graph $G$.  For any finite cover $H$ of $G$ with covering homomorphism $h:H\rightarrow G$, we can "lift" any vector of beliefs, $\tau^G$, from $G$ to $H$ by defining a new vector of beliefs, $\tau^H$, such that:
\begin{itemize}
\item For all variable nodes $i\in H$, $\tau^H_i = \tau^G_{\pi(i)}$.
\item For all edges $(i,j)\in H$, $\tau^H_{ij} = \tau^G_{\pi(i)\pi(j)}$.
\end{itemize}
Analogously, we can lift any assignment $x^G$ to an assignment $x^H$ by setting $x^H_i = x^G_{\pi(i)}$.

\subsection{Graph Covers and Quadratic Minimization}
\label{sec:gcovers}
Let $G$ be the pairwise factor graph for the objective function $f^G(x_1,..., x_n) = \frac{1}{2}x^T\Gamma x - h^Tx$ whose edges correspond to the nonzero entries of $\Gamma$. Let $H$ be a $k$-cover of $G$ with corresponding objective function $f^H(x_{11},...,x_{1k},...x_{nk}) = \frac{1}{2}x^T\widetilde{\Gamma} x - \widetilde{h}^Tx$.  Without loss of generality we can assume that $\widetilde{\Gamma}$ and $\widetilde{h}$ take the following form:
\begin{eqnarray}
\label{eq:perm}
\widetilde{\Gamma} & = & \begin{pmatrix}
\Gamma_{11}P_{11} & \cdots & \Gamma_{1n}P_{1n}\nonumber\\
\vdots & \ddots & \vdots\\
\Gamma_{n1}P_{n1} & \cdots & \Gamma_{nn}P_{nn}
\end{pmatrix}\\
\widetilde{h}_i & = & h_{\left\lceil i/k\right\rceil},
\end{eqnarray}
where $P_{ij}$ is a $k\times k$ permutation matrix for all $i \neq j$ and $P_{ii}$ is the $k\times k$ identity matrix for all $i$.  If $\widetilde{\Gamma}$ is dervied from $\Gamma$ in this way, then we will say that $\widetilde{\Gamma}$ covers $\Gamma$.

For the quadratic minimization problem, factor graphs and their covers share many of the same properties.  Most notably, we can transform critical points of covers to critical points of the original problem.  Let $H$ and $G$ be as above, and let $\pi$ be the graph homomorphism from $H$ to $G$. For $x\in\mathbb{R}^{|V_G|}$, define $\text{lift}_H(x)\in\mathbb{R}^{|V_H|}$ such that \[\text{lift}_H(x)_i = x_{\pi(i)}\] for all $i\in H$.  Similarly, for each $y\in\mathbb{R}^{|V_H|}$, define $\text{proj}_G(y)\in\mathbb{R}^{|V_G|}$ such that \[\text{proj}(y)_i = \sum_{k\in H : h(k) = i} \frac{y_k}{|\{j\in H : h(j) = i\}|}\] for all $i\in G$.  With these definitions, we have the following lemma.

\begin{lemma}
\label{crit}
If $\widetilde{\Gamma} y = \widetilde{h}$ for $y\in \mathbb{R}^{|V_H|}$, then $\Gamma\cdot \text{proj}_G(y) = h$.  Conversely, if $\Gamma x = h$ for $y\in \mathbb{R}^{|V_G|}$, then $\widetilde{\Gamma}\cdot \text{lift}_H(x) = \widetilde{h}$.
\end{lemma}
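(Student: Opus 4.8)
The plan is to verify both implications by a direct block computation whose only real ingredient is the fact that a permutation matrix fixes the all-ones vector. Write $k$ for the cover degree, and for a vector $y\in\mathbb{R}^{|V_H|}$ let $y^{(i)}\in\mathbb{R}^{k}$ denote the subvector of $y$ indexed by the $k$ copies of node $i\in G$, so that $\mathrm{proj}_G(y)_i=\tfrac{1}{k}\mathbf{1}_k^{T}y^{(i)}$, where $\mathbf{1}_k$ is the all-ones vector in $\mathbb{R}^k$. Using the block structure of $\widetilde{\Gamma}$ given above, the equation $\widetilde{\Gamma}y=\widetilde{h}$ is equivalent to the collection of block equations $\sum_{j}\Gamma_{ij}P_{ij}\,y^{(j)}=h_i\mathbf{1}_k$, one for each $i\in G$ (with $P_{ii}=I$), since $\widetilde{h}_\ell=h_{\lceil \ell/k\rceil}$ means precisely that the $i$-th block of $\widetilde{h}$ is $h_i\mathbf{1}_k$. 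The only property of the matrices $P_{ij}$ I will use is that, being permutation matrices, they satisfy $P_{ij}\mathbf{1}_k=\mathbf{1}_k$ and $\mathbf{1}_k^{T}P_{ij}=\mathbf{1}_k^{T}$.

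For the forward implication, I would left-multiply the $i$-th block equation by $\tfrac{1}{k}\mathbf{1}_k^{T}$. Because $\mathbf{1}_k^{T}P_{ij}=\mathbf{1}_k^{T}$ and $\tfrac{1}{k}\mathbf{1}_k^{T}(h_i\mathbf{1}_k)=h_i$, the left side collapses to $\sum_j \Gamma_{ij}\big(\tfrac{1}{k}\mathbf{1}_k^{T}y^{(j)}\big)=\sum_j\Gamma_{ij}\,\mathrm{proj}_G(y)_j$, so we obtain $\sum_j\Gamma_{ij}\,\mathrm{proj}_G(y)_j=h_i$ for every $i$, which is precisely $\Gamma\cdot\mathrm{proj}_G(y)=h$.

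For the converse, set $z=\mathrm{lift}_H(x)$, so that $z^{(j)}=x_j\mathbf{1}_k$ for every $j\in G$. Then the $i$-th block of $\widetilde{\Gamma}z$ is $\sum_j\Gamma_{ij}P_{ij}(x_j\mathbf{1}_k)=\big(\sum_j\Gamma_{ij}x_j\big)\mathbf{1}_k=(\Gamma x)_i\,\mathbf{1}_k=h_i\mathbf{1}_k$, using $P_{ij}\mathbf{1}_k=\mathbf{1}_k$ and the hypothesis $\Gamma x=h$. Since the $i$-th block of $\widetilde{h}$ is $h_i\mathbf{1}_k$, this says exactly $\widetilde{\Gamma}\cdot\mathrm{lift}_H(x)=\widetilde{h}$.

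There is no substantive obstacle here: the statement is a routine linear-algebra verification, and the entire argument reduces to the observation that permutation matrices fix $\mathbf{1}_k$ together with careful bookkeeping of the block indices induced by the covering homomorphism. The one point worth flagging is that $\mathrm{proj}_G$ must be the \emph{uniform} average over each fiber for the left-multiplication by $\tfrac{1}{k}\mathbf{1}_k^{T}$ to reproduce $h_i$ exactly; a non-uniform weighting would not close the argument, since distinct copies of the same node need not receive equal values of $y$.
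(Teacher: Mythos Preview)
Your argument is correct and complete. The paper itself states Lemma~\ref{crit} without proof, treating it as a routine verification; your block computation using $\mathbf{1}_k^{T}P_{ij}=\mathbf{1}_k^{T}$ and $P_{ij}\mathbf{1}_k=\mathbf{1}_k$ is exactly the natural way to fill in that omitted detail, and your remark that the averaging in $\mathrm{proj}_G$ must be uniform is the only subtlety worth recording.
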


Notice that these solutions correspond to critical points of the cover and the original problem.  Similarly, we can transform eigenvectors of covers to either eigenvectors of the original problem or the zero vector.

\begin{lemma}
\label{eigen}
Fix $\lambda\in\mathbb{R}$.  If $\widetilde{\Gamma} y = \lambda y$, then either $\Gamma\cdot \text{proj}_G(y) = \lambda \text{proj}_G(y)$ or $\Gamma\cdot \text{proj}_G(y) = 0$.  Conversely, if $\Gamma x = \lambda x$, then $\widetilde{\Gamma}\cdot\text{lift}_H(x) = \lambda \text{lift}_H(x)$.
\end{lemma}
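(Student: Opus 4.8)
The plan is to reduce both halves of the lemma to two \emph{intertwining identities} relating $\Gamma$, $\widetilde{\Gamma}$, and the linear maps $\text{lift}_H$ and $\text{proj}_G$. Write $L\colon\mathbb{R}^{|V_G|}\to\mathbb{R}^{|V_H|}$ for $\text{lift}_H$ and $A\colon\mathbb{R}^{|V_H|}\to\mathbb{R}^{|V_G|}$ for $\text{proj}_G$. In the block coordinates of the block representation of $\widetilde{\Gamma}$, if $\mathbf{1}\in\mathbb{R}^k$ is the all-ones vector and $y_i\in\mathbb{R}^k$ denotes the $i$-th block of a vector $y\in\mathbb{R}^{|V_H|}$, then $L$ sends $x$ to the vector whose $i$-th block is $x_i\mathbf{1}$, and $A$ averages each block, $(Ay)_i=\tfrac1k\mathbf{1}^T y_i$. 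I would first record the two elementary facts about any $k\times k$ permutation matrix $P$ that drive the whole argument: $P\mathbf{1}=\mathbf{1}$ and $\mathbf{1}^TP=\mathbf{1}^T$, i.e.\ every row and every column of $P$ sums to $1$.

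Next I would establish $\widetilde{\Gamma}L=L\Gamma$. Using the block form, the $i$-th block of $\widetilde{\Gamma}Lx$ is $\sum_j \Gamma_{ij}P_{ij}(x_j\mathbf{1}) = \sum_j \Gamma_{ij}x_j(P_{ij}\mathbf{1}) = \big(\sum_j\Gamma_{ij}x_j\big)\mathbf{1} = (\Gamma x)_i\mathbf{1}$, which is precisely the $i$-th block of $L\Gamma x$. Symmetrically — now using column sums instead of row sums — I would establish $A\widetilde{\Gamma}=\Gamma A$: the $i$-th entry of $A\widetilde{\Gamma}y$ is $\tfrac1k\mathbf{1}^T\sum_j\Gamma_{ij}P_{ij}y_j = \tfrac1k\sum_j\Gamma_{ij}(\mathbf{1}^TP_{ij})y_j = \tfrac1k\sum_j\Gamma_{ij}\mathbf{1}^T y_j = \sum_j\Gamma_{ij}(Ay)_j = (\Gamma Ay)_i$.

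Given these identities the lemma is immediate. For the converse, if $\Gamma x=\lambda x$ then $\widetilde{\Gamma}\,\text{lift}_H(x)=\widetilde{\Gamma}Lx=L\Gamma x=\lambda Lx=\lambda\,\text{lift}_H(x)$. For the forward direction, if $\widetilde{\Gamma}y=\lambda y$ then applying $A$ to both sides gives $\Gamma\,\text{proj}_G(y)=\Gamma Ay=A\widetilde{\Gamma}y=\lambda Ay=\lambda\,\text{proj}_G(y)$; this is the first alternative, and it degenerates to the second, $\Gamma\,\text{proj}_G(y)=0$, exactly when $\text{proj}_G(y)=0$, which is the only case in which the averaged vector fails to be a genuine eigenvector.

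There is essentially no hard step here; the only thing that requires a little care is the bookkeeping with the permutation blocks, and in particular observing that the forward direction relies on the column-sum identity $\mathbf{1}^TP_{ij}=\mathbf{1}^T$ whereas the converse relies on the row-sum identity $P_{ij}\mathbf{1}=\mathbf{1}$ — both hold for permutation matrices, so the argument is the exact analogue of the proof of Lemma \ref{crit} with $h$ and $\widetilde{h}$ replaced by $\lambda\,\text{proj}_G(y)$ and $\lambda y$.
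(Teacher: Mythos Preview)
Your proof is correct. The paper itself does not supply a proof of Lemma~\ref{eigen}; it is stated as an elementary observation alongside Lemma~\ref{crit} and then immediately used. Your derivation via the two intertwining identities $\widetilde{\Gamma}L=L\Gamma$ and $A\widetilde{\Gamma}=\Gamma A$, both consequences of $P_{ij}\mathbf{1}=\mathbf{1}$ and $\mathbf{1}^TP_{ij}=\mathbf{1}^T$ for permutation matrices, is exactly the calculation the paper is tacitly relying on, and your remark that the ``or $\Gamma\cdot\text{proj}_G(y)=0$'' alternative is just the degenerate case $\text{proj}_G(y)=0$ is the right reading of the statement.
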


These lemmas demonstrate that we can average critical points and eigenvectors of covers to obtain critical points and eigenvectors (or the zero vector) of the original problem, and we can lift critical points and eigenvectors of the original problem in order to obtain critical points and eigenvectors of covers. 

\begin{figure}
\begin{center}
\begin{tabular}{cc}
$\Gamma  =  \begin{pmatrix}
1 & .6 & .6\\
.6 & 1 & .6\\
.6 & .6 & 1
\end{pmatrix}$ & $\widetilde{\Gamma} =  \begin{pmatrix}
1 & 0 & .6 & 0 & 0 & .6\\
0 & 1 & 0 & .6 & .6 & 0\\
.6 & 0 & 1 & 0 & .6 & 0\\
0 & .6 & 0 & 1 & 0 & .6\\
0 & .6 & .6 & 0 & 1 & 0\\
.6 & 0 & 0 & .6 & 0 & 1
\end{pmatrix}$
\end{tabular}
\end{center}
\caption[A positive definite matrix that is covered by a matrix with negative eigenvalues.]{An example of a positive definite matrix, $\Gamma$, which possesses a 2-cover, $\widetilde{\Gamma}$, that has negative eigenvalues.}
\label{fig:quadcover}
\end{figure}

Unfortunately, even though the critical points of $G$ and its covers must correspond via Lemma \ref{crit}, the corresponding minimization problems may not have the same solution.  The example in Figure \ref{fig:quadcover} illustrates that there exist positive definite matrices that are covered by matrices that are not positive definite.  This observation seems to be problematic for the convergence of iterative message-passing schemes.  Specifically, the fixed points of the reweighted algorithm on the base graph are also fixed points of the reweighted algorithm on any graph cover.  As such, the reweighted algorithm may not converge to the correct minimizing assignment when the matrix corresponding to some cover of $G$ is not positive definite.  Consequently, we will first consider the special case in which $\Gamma$ and all of its covers are positive definite.  We can exactly characterize the matrices for which this property holds.

\begin{theorem}
\label{2cover}
Let $\Gamma$ be a symmetric matrix with positive diagonal. The following are equivalent:
\begin{enumerate}
\item $\Gamma$ is walk-summable.
\item $\Gamma$ is scaled diagonally dominant.
\item All covers of $\Gamma$ are positive definite.
\item All 2-covers of $\Gamma$ are positive definite.
\end{enumerate}
\end{theorem}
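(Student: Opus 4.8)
The plan is to establish the cycle $(1)\Rightarrow(2)\Rightarrow(3)\Rightarrow(4)\Rightarrow(1)$. Since $(1)\Leftrightarrow(2)$, the equivalence of walk-summability and scaled diagonal dominance, is already known (\citet{malthesis, allerton09}), the actual content lies in $(2)\Rightarrow(3)$, the trivial $(3)\Rightarrow(4)$, and the converse $(4)\Rightarrow(1)$. First I would reduce to the normalized case $\Gamma_{ii}=1$ for all $i$: conjugating $\Gamma$ by the positive diagonal matrix $D^{-1/2}$ commutes with taking covers (a cover of $D^{-1/2}\Gamma D^{-1/2}$ is obtained from the corresponding cover of $\Gamma$ by conjugating with the positive diagonal matrix that repeats the entries of $D^{-1/2}$ across the copies, as one checks block by block) and preserves positive definiteness, scaled diagonal dominance and walk-summability, so no generality is lost. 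Write $\Gamma=I-R$ with $R$ having zero diagonal, so walk-summability is exactly $\varrho(|R|)<1$.

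For $(2)\Rightarrow(3)$: if $\Gamma$ is scaled diagonally dominant with witness $w>0$, then every cover $\widetilde\Gamma$ is scaled diagonally dominant with witness $\text{lift}_H(w)$ --- because neighborhoods map bijectively, for any copy $i'$ of $i$ we have $\widetilde\Gamma_{i'i'}=\Gamma_{ii}$ and $\sum_{j'\neq i'}|\widetilde\Gamma_{i'j'}|\,\text{lift}_H(w)_{j'}=\sum_{j\in\partial i}|\Gamma_{ij}|w_j<\Gamma_{ii}w_i$. It then suffices to recall the standard fact that a symmetric scaled-diagonally-dominant matrix with positive diagonal is positive definite: with $W=\mathrm{diag}(w)$, the congruent matrix $W\Gamma W$ is (ordinarily) diagonally dominant with positive diagonal, hence positive definite by Gershgorin's theorem. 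Applying this to each $\widetilde\Gamma$ gives $(3)$, and $(3)\Rightarrow(4)$ holds because every $2$-cover is a cover.

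The heart of the argument is $(4)\Rightarrow(1)$, which I would prove contrapositively: assuming $\varrho\coloneqq\varrho(|R|)\geq1$, I construct a $2$-cover that is not positive definite. As $|R|$ is symmetric and nonnegative, Perron--Frobenius provides a nonzero $u\geq0$ with $|R|u=\varrho u$, i.e. $\sum_{j\neq i}|\Gamma_{ij}|u_j=\varrho u_i$ for all $i$. Build the $2$-cover $\widetilde\Gamma$ in the block form of \eqref{eq:perm} by choosing, for each edge $(i,j)$, the permutation $P_{ij}=I$ when $\Gamma_{ij}<0$ and $P_{ij}=\left(\begin{smallmatrix}0&1\\1&0\end{smallmatrix}\right)$ when $\Gamma_{ij}>0$; this "untwists" the signs of $R$ so the cover behaves like $I-|R|$. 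Now test $\widetilde\Gamma$ against the vector $y\in\mathbb{R}^{2n}$ that assigns $u_i$ to the first copy of $i$ and $-u_i$ to the second copy. With $e=(1,-1)^T$ one has $e^T I e=2$ and $e^T\left(\begin{smallmatrix}0&1\\1&0\end{smallmatrix}\right)e=-2$, so $\Gamma_{ij}\,e^TP_{ij}e=-2|\Gamma_{ij}|$ in both cases, and a block computation gives
\[
y^T\widetilde\Gamma y \;=\; 2\sum_i \Gamma_{ii}u_i^2 \;-\; 2\sum_{i\neq j}|\Gamma_{ij}|\,u_iu_j \;=\; 2(1-\varrho)\sum_i u_i^2 \;\leq\;0,
\]
using $\Gamma_{ii}=1$ and $\sum_{j\neq i}|\Gamma_{ij}|u_j=\varrho u_i$. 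Since $y\neq0$, $\widetilde\Gamma$ is not positive definite, contradicting $(4)$; hence $(4)\Rightarrow(1)$.

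I expect the main obstacle to be this last construction: arranging the sign bookkeeping so the chosen permutations cancel the signs of $\Gamma$ against those of $y$ across the two sheets of the cover, checking that the resulting $\widetilde\Gamma$ really is a $2$-cover of the form \eqref{eq:perm} (symmetry requires $P_{ji}=P_{ij}^T$, which is automatic here since these $2\times2$ permutations are symmetric), and treating the boundary case $\varrho(|R|)=1$ correctly, where "not positive definite" must be read as "positive semidefinite but singular." The remaining steps are routine invocations of Gershgorin's theorem, Perron--Frobenius, and the definitions.
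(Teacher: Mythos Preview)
Your proposal is correct and follows essentially the same route as the paper: the same reduction to unit diagonal, the same lifting of the scaling witness for $(2)\Rightarrow(3)$, and crucially the same $2$-cover construction for $(4)\Rightarrow(1)$ (identity permutation on edges with $\Gamma_{ij}<0$, swap on edges with $\Gamma_{ij}>0$) tested against the alternating-sign lift of the Perron eigenvector of $|R|$. The only cosmetic differences are that the paper proves $(1)\Rightarrow(2)$ directly via Perron--Frobenius rather than citing it, and frames $(4)\Rightarrow(1)$ as a direct implication rather than a contrapositive; the underlying computation $z^T\widetilde\Gamma z = 1-\varrho(|I-\Gamma|)$ (after normalization) is the same as yours.
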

\begin{proof}
The two non-trivial implications in the proof ($4 \Rightarrow 1$ and $1 \Rightarrow 2$) make use of the Perron-Frobenius theorem.  For the complete details, see Appendix \ref{ap:2cover}.
\end{proof}

This theorem has several important consequences.  First, it provides us with a combinatorial characterization of scaled diagonal dominance and walk-summability.  Second, it provides an intuitive explanation for why these conditions should be sufficient for the convergence of local message-passing algorithms.  Indeed,  walk-summability and scaled diagonal dominance were independently shown to be sufficient conditions for the convergence of the min-sum algorithm for positive definite matrices \citep{malioutov, convexciamac}.  Most importantly, we can use the theorem to conclude that MPLP, tree-reweighted max-product, and other message-passing algorithms that guarantee the correctness of locally decodable beliefs cannot converge to the correct solution when $\Gamma$ is positive definite but not walk-summable.  To see this, note that, for these algorithms, if the beliefs are locally decodable to a vector $x^*$, then $x^*$ must minimize the objective function.  As we saw earlier, any collection of locally decodable beliefs on the base graph can be lifted to locally decodable beliefs on any graph cover.  In other words, the lift of $x^*$ to each graph cover must be a global minimum on that cover.  However, there exists at least one cover with no global minimum.  As a result, these algorithms cannot converge to locally decodable beliefs.   For more details about these types of message-passing algorithms, we refer the reader to \citet{MPLP},\citet{waintree}, and \citet{sontag}.

Contrast this analysis with Theorem \ref{thm:corr}.  In general, the reweighted message-passing algorithm only guarantees that $x^*$ is a local optimum whenever the objective function is not positive semidefinite, but there exist simple choices for the reweighting parameters that guarantee correctness over all covers.  As an example, if $c_{ij} \leq \frac{1}{\max_{i\in V} |\partial i|}$ for all $(i,j)\in E$, then one can show that the reweighted algorithm cannot converge to locally decodable beliefs. The traditional choice of parameters for the TRMP algorithm where each $c_{ij}$ corresponds to an edge appearance probability provides another example.  As such, in order to produce convergent message-passing schemes for the quadratic minimization problem, we will need to study choices of the parameters that do not guarantee correctness over all graph covers.

\section{Convergence Properties of Reweighted Message-Passing Algorithms}
Recall that GaBP algorithm can converge to the correct minimizer of the objective function even if the original matrix is not scaled diagonally dominant.  The most significant problem when the original matrix is positive definite but not scaled diagonally dominant is that the computation trees may eventually possess negative eigenvalues due to the existence of some 2-cover with at least one non-positive eigenvalue.  If this happens, then some of the beliefs will not be bounded from below, and the corresponding estimate will be negative infinity.  This is, of course, the correct answer on some 2-cover of the problem, but it is not the correct solution to the minimization problem of interest.  

Our goal in this section is to understand how the choice of the parameters affects the convergence of the reweighted algorithm.

\label{sec:convmsg}
\subsection{Convergence of the Variances}
\label{sec:vars}
  In this section, we will provide conditions on the choice of the parameter vector such that all of the computation trees produced by the reweighted algorithm remain positive definite throughout the course of the algorithm.

Positive definiteness of the computation trees corresponds to the convexity of the beliefs, and the convexity of the belief, $\tau^t_i$, is determined only by the vector $a^t$.  As such, we begin by studying the sequence $a^0, a^1,...$ where $a^0$ is the zero vector (based on our initialization).  We will consider two different choices for the parameter vector: one in which $c_{ij} \geq 1$ for all $i$ and $j$ and one in which $c_{ij} \leq 0$ for all $i$ and $j$.  

\subsubsection{Positive Parameters}
\begin{lemma}
\label{mono}
If $c_{ij} \geq 1$ for all $i$ and $j$, then for all $t > 0$, $a^t_{i\rightarrow j} \leq a^{t-1}_{i\rightarrow j} \leq 0$ for each $i$ and $j$.
\end{lemma}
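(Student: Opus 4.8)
The plan is to argue by induction on $t$, carrying along the auxiliary quantities $A^t_{i\setminus j}$. Recall from the closed form that $a^t_{i\rightarrow j} = -\big(\tfrac{\Gamma_{ij}}{c_{ij}}\big)^{2}\big/ A^t_{i\setminus j}$ with $A^t_{i\setminus j} = \Gamma_{ii} + \sum_{k\in\partial i}c_{ki}a^{t-1}_{k\rightarrow i} - a^{t-1}_{j\rightarrow i}$, and that $a^t_{i\rightarrow j}$ is set to $-\infty$ whenever $A^t_{i\setminus j}\le 0$. Two elementary observations will drive everything: first, for $t\ge1$ the bound $a^t_{i\rightarrow j}\le 0$ is automatic from the sign of the update, so the only real content is the monotonicity $a^t_{i\rightarrow j}\le a^{t-1}_{i\rightarrow j}$; second, on $(0,\infty)$ the map $A\mapsto -\big(\tfrac{\Gamma_{ij}}{c_{ij}}\big)^{2}/A$ is nondecreasing, so it will suffice to establish $A^t_{i\setminus j}\le A^{t-1}_{i\setminus j}$ at every step.

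For the base case I would use $a^0\equiv 0$, which gives $A^1_{i\setminus j}=\Gamma_{ii}>0$ (positive diagonal) and hence $a^1_{i\rightarrow j}=-(\Gamma_{ij}/c_{ij})^{2}/\Gamma_{ii}\le 0 = a^0_{i\rightarrow j}$. For the inductive step I would fix an edge $i\rightarrow j$ and regroup $A^t_{i\setminus j}=\Gamma_{ii}+\sum_{k\in\partial i\setminus j}c_{ki}a^{t-1}_{k\rightarrow i}+(c_{ji}-1)a^{t-1}_{j\rightarrow i}$, and likewise $A^{t-1}_{i\setminus j}$ with $a^{t-2}$ in place of $a^{t-1}$. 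This regrouping is exactly where the hypothesis $c_{ij}\ge1$ enters: it makes every coefficient $c_{ki}$ strictly positive and makes the backward coefficient $c_{ji}-1$ nonnegative. Multiplying the inductive inequalities $a^{t-1}_{k\rightarrow i}\le a^{t-2}_{k\rightarrow i}$ by these nonnegative coefficients and adding yields $A^t_{i\setminus j}\le A^{t-1}_{i\setminus j}$. From here I would split into three cases: $A^{t-1}_{i\setminus j}\le0$ (then $A^t_{i\setminus j}\le0$ too, so both messages equal $-\infty$); $A^{t-1}_{i\setminus j}>0\ge A^t_{i\setminus j}$ (then $a^t_{i\rightarrow j}=-\infty\le a^{t-1}_{i\rightarrow j}$); and $A^t_{i\setminus j},A^{t-1}_{i\setminus j}>0$ (then apply monotonicity of $A\mapsto-(\Gamma_{ij}/c_{ij})^2/A$). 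Each case gives $a^t_{i\rightarrow j}\le a^{t-1}_{i\rightarrow j}\le0$, closing the induction.

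The part needing the most care — though it is bookkeeping rather than substance — is the extended-real arithmetic. The inequality $A^t_{i\setminus j}\le A^{t-1}_{i\setminus j}$ must be read in $[-\infty,\infty)$; this is harmless because no term of $A^t_{i\setminus j}$ can be $+\infty$ (every $a$ is $\le0$ and every coefficient is $\ge0$), so $A^t_{i\setminus j}$ is either finite or $-\infty$ and the term-by-term comparison still goes through. The one degenerate product is $0\cdot(-\infty)$, which can occur only in the backward term when $c_{ji}=1$ and $a^{t-1}_{j\rightarrow i}=-\infty$; this should be read as $0$, consistent with the fact that a zero-weighted message simply drops out of the update in Algorithm~\ref{alg:pairs}. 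I would also note explicitly that the lemma makes no finiteness claim: once $A^s_{i\setminus j}\le0$ the value $a^s_{i\rightarrow j}$ — and, by the monotonicity just proved, every later $a^t_{i\rightarrow j}$ along that edge — equals $-\infty$, which is precisely the divergence, caused by a computation tree acquiring a nonpositive eigenvalue, that the remainder of the section is concerned with.
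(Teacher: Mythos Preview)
Your proposal is correct and follows essentially the same route as the paper: induction on $t$, showing that the denominator $A^t_{i\setminus j}$ is nonincreasing because each coefficient $c_{ki}$ and $c_{ji}-1$ is nonnegative, and then using monotonicity of $A\mapsto -(\Gamma_{ij}/c_{ij})^2/A$ on $(0,\infty)$. The paper's version is terser and handles the invalid-update case in one line, whereas you spell out the base case, the three-way case split on the signs of $A^t_{i\setminus j}$ and $A^{t-1}_{i\setminus j}$, and the extended-real bookkeeping more explicitly; but the substance is identical.
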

\begin{proof}
This result follows by induction on $t$.  First, suppose that $c_{ij} \geq 1$.  If the update is not valid, then $a_{i\rightarrow j}^t = -\infty$ which trivially satisfies the inequality. Otherwise, we have:
\begin{eqnarray*}
a_{ij}^t & = & \frac{-\Big(\frac{\Gamma_{ij}}{c_{ij}}\Big)^2}{\Gamma_{ii} + \sum_{k\in \partial i \setminus j} c_{ki}a_{k\rightarrow i}^{t-1} + (c_{ji}-1)a_{j\rightarrow i}^{t-1}}\hspace{.5cm}\\
& \leq & \frac{\Big(\frac{\Gamma_{ij}}{c_{ij}}\Big)^2}{\Gamma_{ii} + \sum_{k\in \partial i \setminus j} c_{ki}a_{k\rightarrow i}^{t-2} + (c_{ji}-1)a_{j\rightarrow i}^{t-2}}\\
& = & a_{i\rightarrow j}^{t-1},
\end{eqnarray*}
where the inequality follows from the observation that $\Gamma_{ii} + \sum_{k\in \partial i \setminus j} c_{ki}a_{k\rightarrow i}^{t-1} + (c_{ji}-1)a_{j\rightarrow i}^{t-1} > 0$ and the induction hypothesis.
\end{proof}

If we consider only the vector $a^t$, then the algorithm may exhibit a weaker form of convergence:
\begin{lemma}
If $c_{ij} \geq 1$ for all $i$ and $j$ and all of the computation trees are positive definite, then the sequence $a_{i\rightarrow j}^0, a_{i\rightarrow j}^1,...$ converges.
\end{lemma}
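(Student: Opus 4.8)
The plan is to combine the monotonicity from Lemma~\ref{mono} with a uniform lower bound on the sequence $a^0_{i\rightarrow j}, a^1_{i\rightarrow j},\ldots$, so that the Monotone Convergence Theorem for real sequences applies. By Lemma~\ref{mono}, under the hypothesis $c_{ij}\ge 1$ the sequence is monotonically non-increasing and bounded above by $0$. Thus it suffices to show that the sequence is bounded below, i.e., that it does not diverge to $-\infty$; this is exactly the content of the hypothesis that all computation trees are positive definite.

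First I would make precise the link between positive definiteness of the computation trees and the quantities $A^t_{i\setminus j}$. Recall from the discussion in Section~\ref{newcompsec} that the belief $\tau^t_i$ is the min-marginal at the root of the computation tree $T_i(t)$, and its curvature is governed by $a^t$; positive definiteness of $T_i(t)$ means precisely that every such min-marginal, and every partial min-marginal obtained by eliminating variables from the leaves inward, is a strictly convex quadratic. In the recursive elimination that produces the message update, the ``Schur complement'' encountered when eliminating the copy of node $i$ that sends to $j$ is exactly $A^{t}_{i\setminus j} = \Gamma_{ii} + \sum_{k\in\partial i\setminus j} c_{ki} a^{t-1}_{k\rightarrow i} + (c_{ji}-1)a^{t-1}_{j\rightarrow i}$. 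So positive definiteness of all computation trees at all times is equivalent to $A^t_{i\setminus j} > 0$ for all $t$, $i$, and $j$ (in particular the update is always valid, so $a^t_{i\rightarrow j}$ is finite for every $t$).

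Next I would produce a finite lower bound. Since all computation trees are positive definite, there is in particular a positive definite computation tree rooted at $i$ of every depth; its quadratic form restricted to the finite set of variables it contains has a smallest eigenvalue, and by the structure of covers/computation trees one can argue this is bounded below away from $0$ uniformly in $t$ — alternatively, and more directly, observe that $A^t_{i\setminus j}>0$ together with $a^{t-1}_{k\rightarrow i}\le 0$ forces $A^t_{i\setminus j}\le \Gamma_{ii}$, whence $|a^t_{i\rightarrow j}| = (\Gamma_{ij}/c_{ij})^2 / A^t_{i\setminus j}$; the difficulty is that $A^t_{i\setminus j}$ could a priori shrink to $0$. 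I would rule this out by the following observation: $A^t_{i\setminus j}$ is itself (up to the isolated $i$-term) a min-marginal curvature of a computation tree rooted at $i$ with the $j$-branch removed, which is a sub-tree of an honest computation tree and hence also positive definite, and moreover its value equals the smallest pivot encountered in a Cholesky-type elimination of that positive definite tree. Since positive-definiteness is assumed for \emph{all} depths, and the family of distinct tree shapes that can arise is finite (each shape is determined by the finite graph $G$ up to the unrolling, and the pivots at a given shape are determined by finitely many bounded entries), $\inf_t A^t_{i\setminus j} = \delta_{ij} > 0$. Then $a^t_{i\rightarrow j} \ge -(\Gamma_{ij}/c_{ij})^2/\delta_{ij}$ for all $t$, giving the desired lower bound, and monotone convergence finishes the proof.

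The main obstacle is the uniform positive lower bound on $A^t_{i\setminus j}$: pointwise positive definiteness of each computation tree only gives $A^t_{i\setminus j}>0$ for each fixed $t$, not a bound bounded away from $0$. The cleanest way around this is probably not the ``finitely many shapes'' argument above but rather to exploit monotonicity of $a^t$ one more time: since $a^{t}_{k\rightarrow i}$ is non-increasing in $t$ and each lies in $(-\infty,0]$, if the $a$-sequences were \emph{not} all bounded below then some $a^{t}_{k\rightarrow i} \to -\infty$, and one shows by tracing the update that this forces some $A^t_{i'\setminus j'}$ to become non-positive at a finite time, contradicting positive definiteness of that finite computation tree. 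Formalizing this ``first time a pivot hits zero'' argument — identifying which message blows up first and deriving the contradiction at that finite stage — is the technical heart of the proof; everything else is bookkeeping with the definitions of $A^t_{i\setminus j}$, $B^t_{i\setminus j}$, and the computation-tree correspondence.
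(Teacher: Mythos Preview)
Your overall plan---monotonicity from Lemma~\ref{mono} plus a uniform lower bound, then monotone convergence---is exactly the paper's strategy. The gap is in how you get the lower bound.

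You attempt to bound $a^t_{i\rightarrow j}$ from below through the update formula $a^t_{i\rightarrow j} = -(\Gamma_{ij}/c_{ij})^2/A^t_{i\setminus j}$, which forces you to bound $A^t_{i\setminus j}$ uniformly away from zero. Both of the arguments you propose for this have problems. The ``finitely many distinct tree shapes'' claim is false: computation trees grow without bound in depth, so there are infinitely many. The ``first time a pivot hits zero'' contradiction is stated incorrectly---$A^t_{i\setminus j}$ could remain strictly positive for every finite $t$ while drifting to $0$, so there need not be a first finite time where it is non-positive. (A version of this can be salvaged by looking instead at the belief curvature at the receiving node, but then you are essentially doing the paper's argument by contradiction.)

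The paper avoids all of this with a one-line observation that never touches $A^t_{i\setminus j}$. Positive definiteness of the computation tree rooted at the \emph{receiving} node says that the belief curvature there is positive:
\[
\Gamma_{jj} + \sum_{k\in\partial j} c_{kj}\, a^{t}_{k\rightarrow j} \;>\; 0.
\]
Since $c_{kj}\ge 1$ and $a^{t}_{k\rightarrow j}\le 0$ by Lemma~\ref{mono}, every summand $c_{kj}a^{t}_{k\rightarrow j}$ is non-positive, so each one individually must exceed $-\Gamma_{jj}$. In particular $a^{t}_{i\rightarrow j} \ge -\Gamma_{jj}/c_{ij}$, a fixed finite lower bound independent of $t$. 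That is the whole proof; no uniform control of $A^t_{i\setminus j}$ is needed.
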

\begin{proof}
Suppose $c_{ij} \geq 1$.  By Lemma \ref{mono}, the $a_{i\rightarrow j}^t$ are monotonically decreasing.  Because all of the computation trees are positive definite, we must have that for each $i$, $\Gamma_{ii} + \sum_{k\in \partial i \setminus j} c_{ki}a_{k\rightarrow i}^{t-1} + c_{ji}a_{j\rightarrow i}^{t-1} > 0$.  Therefore, for all $(i,j)\in E$, $a^t_{i\rightarrow j} \geq -\frac{\Gamma_{ii}}{c_{ij}}$, and the sequence $a_{i\rightarrow j}^0,a_{i\rightarrow j}^1,...$ is monotonically decreasing and bounded from below.  This implies that the sequence converges.  \\
\end{proof}

Because the estimates of the variances only depend on the vector $a^t$, if the $a_{i\rightarrow j}^t$ converge, then the estimates of the variances also converge.  Therefore, requiring all of the computation trees to be positive definite is a sufficient condition for convergence of the variances. Note, however, that the estimates of the means which correspond to the sequence $b_{i\rightarrow j}^t$ need not converge even if all of the computation trees are positive definite (see Figure \ref{fig:varsnomeans}).

\begin{figure}
\begin{center}
$\begin{pmatrix}
                         1     &              0.39866      &            -0.39866        &          -0.39866\\
                   0.39866      &                   1      &            -0.39866        &                 0\\
                  -0.39866      &            -0.39866      &                   1        &          -0.39866\\
                  -0.39866      &                   0      &            -0.39866        &                 1
                  \end{pmatrix}$
\end{center}
\caption[A positive definite matrix for which the variances in the standard min-sum algorithm converge but the means do not.]{A positive definite matrix for which the variances in the min-sum algorithm converge but the means do not. \citep{malthesis}}
\label{fig:varsnomeans}
\end{figure}

Our strategy will be to ensure that all of the computation trees are positive definite by leveraging the choice of parameters, $c_{ij}$.  Specifically, we want to use these parameters to weight the diagonal elements of the computation tree much more than the off-diagonal elements in order to force the computation trees to be positive definite.  If we can show that there is a choice of each $c_{ij} = c_{ji}$ that will cause all of the computation trees to be positive definite, then Algorithm \ref{alg:pairs} should behave almost as if the original matrix were scaled diagonally dominant.  There always exists a choice of the vector $c$ that achieves this.

\begin{theorem}
\label{posdefcomp}
For any symmetric matrix $\Gamma$ with strictly positive diagonal, $\exists r \geq 1$ and an $\epsilon > 0$ such that the eigenvalues of the computation trees are bounded from below by $\epsilon$ when generated by Algorithm \ref{alg:pairs} with $c_{ij} = r$ for all $i$ and $j$.
\end{theorem}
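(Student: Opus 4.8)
The plan is to prove a fully quantitative version of the statement: set $\mu=\max_{(v,w)\in E}\frac{|\Gamma_{vw}|}{\sqrt{\Gamma_{vv}\Gamma_{ww}}}$, $\gamma=\min_{v}\Gamma_{vv}>0$, and $\Delta=\max_{v}|\partial v|$, and show that for every $r\geq\max\{2,\,4(\Delta+1)^2\mu^2\}$ and every computation tree $T=T_i(t)$ produced by Algorithm~\ref{alg:pairs} with all $c_{ij}=r$, the Hessian $\Gamma^{T}$ of the tree quadratic satisfies $\Gamma^{T}\succeq\tfrac12\gamma\,I$; so one may take $r$ as above and $\epsilon=\gamma/2$, a bound that is \emph{uniform} in the root $i$ and the depth $t$. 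The first step is to record the exact form of $\Gamma^{T}$. Following the construction in Section~\ref{newcompsec}, attach to each edge $e$ of $T$ the weight $\nu(e)=r$ if $e$ records a non-backtracking step and $\nu(e)=r-1$ if it records a backtracking step, and for a node $u$ of $T$ let $w_u$ be the product of $\nu$ over the edges on the path from $u$ to the root, with $w_{\mathrm{root}}=1$. Unrolling the belief $\tau^{t}_i$ shows that $\Gamma^{T}_{uu}=\Gamma_{\pi(u)\pi(u)}\,w_u$, and that for a tree edge $\{u,u'\}$ with $u'$ the endpoint farther from the root, $\Gamma^{T}_{uu'}=\Gamma_{\pi(u)\pi(u')}\,w_{u'}/r$ — the $\psi$-term inside the message $m_{\pi(u')\to\pi(u)}$ carries the factor $1/c_{\pi(u)\pi(u')}=1/r$ and then accumulates exactly the same product of branch weights that multiplies the self-potential of $u'$.

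Given this description, the estimate is short. Since $w_{u'}/w_u=\nu(\{u,u'\})\leq r$ along any tree edge and $\nu(e)\geq r-1\geq 1$ whenever $r\geq 2$, we obtain two bounds that do not depend on $i$, $t$, or the location of $u$ in $T$: first, $\Gamma^{T}_{uu}=\Gamma_{\pi(u)\pi(u)}w_u\geq\gamma$ for every node $u$; second,
\[
\frac{(\Gamma^{T}_{uu'})^2}{\Gamma^{T}_{uu}\,\Gamma^{T}_{u'u'}}
=\frac{\Gamma_{\pi(u)\pi(u')}^2}{\Gamma_{\pi(u)\pi(u)}\Gamma_{\pi(u')\pi(u')}}\cdot\frac{\nu(\{u,u'\})}{r^{2}}
\leq\frac{\mu^2}{r}
\]
for every tree edge $\{u,u'\}$. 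Now normalize: with $D$ the diagonal of $\Gamma^{T}$, the matrix $\widehat\Gamma^{T}=D^{-1/2}\Gamma^{T}D^{-1/2}$ has unit diagonal and off-diagonal entries bounded in absolute value by $\mu/\sqrt r$. Because a copy of $v$ in $T$ has $|\partial v|$ children and at most one parent, $T$ has maximum degree at most $\Delta+1$, so each row of $\widehat\Gamma^{T}$ has at most $\Delta+1$ nonzero off-diagonal entries; the Gershgorin circle theorem then gives $\lambda_{\min}(\widehat\Gamma^{T})\geq 1-(\Delta+1)\mu/\sqrt r\geq\tfrac12$ for $r$ in the stated range. Finally, for any vector $x$, $x^{\top}\Gamma^{T}x=(D^{1/2}x)^{\top}\widehat\Gamma^{T}(D^{1/2}x)\geq\lambda_{\min}(\widehat\Gamma^{T})\,\|D^{1/2}x\|^2\geq\tfrac12\big(\min_u\Gamma^{T}_{uu}\big)\|x\|^2\geq\tfrac12\gamma\|x\|^2$, which proves $\Gamma^{T}\succeq\tfrac12\gamma\,I$ for every computation tree simultaneously. (As a by‑product, positive definiteness of every $\Gamma^{T_i(t)}$ makes all message updates valid, so by the lemma of Section~\ref{sec:vars} the variance estimates converge monotonically.)

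The part I expect to require the most care is the first step — verifying the precise identities $\Gamma^{T}_{uu}=\Gamma_{\pi(u)\pi(u)}w_u$ and $\Gamma^{T}_{uu'}=\Gamma_{\pi(u)\pi(u')}w_{u'}/r$. This means tracking, through the nested reweighted message updates, how the factor $(c_{ij}-1)$ that appears on the backtracking term in Algorithm~\ref{alg:pairs} propagates down each branch, and checking that the edge potential of an edge $\{u,u'\}$ and the self-potential of its deeper endpoint $u'$ really do accumulate the identical product of branch weights. It is also here that the hypothesis $r\geq 2$ (rather than merely $r\geq 1$) is used, to keep every branch weight $\nu(e)=r-1$ at least $1$ so that $w_u\geq 1$; a little slack in $r$ costs nothing since the theorem only asserts existence. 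Once those two identities are in hand, everything else is Gershgorin together with the observation that all the constants $\mu,\gamma,\Delta$ come from $G$ alone, which is exactly what delivers a single $\epsilon$ good for all roots and all times.
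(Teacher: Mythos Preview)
Your argument is correct and, once the two identities $\Gamma^{T}_{uu}=\Gamma_{\pi(u)\pi(u)}w_u$ and $\Gamma^{T}_{uu'}=\Gamma_{\pi(u)\pi(u')}\,w_{u'}/r$ are verified (they are: the message $m_{\pi(u')\to\pi(u)}$ carries the factor $1/r$ on its $\psi$-term and then picks up exactly the product $w_{u'}$ of branch weights on the way to the root), the rest is just Gershgorin plus the obvious bound $w_u\geq 1$ for $r\geq 2$.

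The paper's proof also hinges on Gershgorin, but the scaling is different. The paper conjugates $\Gamma^{T}$ by a \emph{depth-only} diagonal matrix $W$ with $W_{uu}=(s/r)^{d(u)}$, introduces a second free parameter $s$, and then argues separately at leaves, interior nodes, and the root that $s$ and $r$ can be chosen to push every disc to the right of some unspecified $\epsilon>0$. Your approach instead normalizes by the true diagonal $D$ of $\Gamma^{T}$; because $w_{u'}/w_u=\nu(\{u,u'\})\leq r$, the ratio $(\Gamma^{T}_{uu'})^2/(\Gamma^{T}_{uu}\Gamma^{T}_{u'u'})$ collapses to at most $\mu^2/r$ regardless of where the edge sits, so a single Gershgorin estimate handles every node uniformly. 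What this buys you is an explicit, fully quantitative bound---$r\geq\max\{2,\,4(\Delta+1)^2\mu^2\}$ and $\epsilon=\gamma/2$---in place of the paper's existential argument, and the case analysis disappears. The paper's depth-based scaling, by contrast, does not exploit the multiplicative structure of $w_u$ directly and therefore has to juggle two parameters; in particular, its scaling treats all nodes at a given depth the same even though their accumulated weights differ along backtracking versus non-backtracking branches.
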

\begin{proof}
The proof of this theorem exploits the Ger\v{s}gorin disc theorem in order to show that there exists a choice of $r$ such that each computation tree is scaled diagonally dominant.  The complete proof can be found in Appendix \ref{ap:posdefcomp}.
\end{proof}

\subsubsection{Negative Parameters}
For the case in which $c_{ij} < 0$ for all $i$ and $j$, we also have that the computation trees are always positive definite when the initial messages are uniformly equal to zero as characterized by the following lemmas.

\begin{lemma}
\label{neg}
If $c_{ij} < 0$ for all $i$ and $j$, then for all $t > 0$, $a^t_{i\rightarrow j} \leq 0$.
\end{lemma}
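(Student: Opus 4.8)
The plan is a short induction on $t$ that simultaneously tracks two facts: that every update is valid (so no $a^t_{i\to j}$ ever equals $-\infty$) and that $a^t_{i\to j}\le 0$. The base case $t=1$ is immediate: since $a^0\equiv 0$, we have $A^1_{i\setminus j}=\Gamma_{ii}>0$ because $\Gamma$ has strictly positive diagonal, so the update of Algorithm \ref{alg:pairs} is valid and $a^1_{i\to j}=-(\Gamma_{ij}/c_{ij})^2/\Gamma_{ii}\le 0$, the numerator being the negative of a square.

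For the inductive step, assume $a^{t-1}_{k\to i}\le 0$ (and finite) for every directed edge. This is where the hypothesis $c_{ij}<0$ does the work: each summand $c_{ki}\,a^{t-1}_{k\to i}$ appearing in $A^t_{i\setminus j}$ is a product of two nonpositive numbers, hence $\ge 0$, and likewise the isolated term $-a^{t-1}_{j\to i}\ge 0$. Therefore $A^t_{i\setminus j}\ge \Gamma_{ii}>0$, so the update is valid and no $-\infty$ is produced. Since the numerator $-(\Gamma_{ij}/c_{ij})^2$ is again nonpositive and the denominator $A^t_{i\setminus j}$ is positive, we conclude $a^t_{i\to j}\le 0$, closing the induction.

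I would also record, for use in the lemma that follows on positive-definiteness of the computation trees, that this argument in fact shows every quantity $\Gamma_{ii}+\sum_{k\in\partial i}c_{ki}a^{t-1}_{k\to i}$ is at least $\Gamma_{ii}$, i.e. reweighting with negative parameters only ever inflates the effective diagonal entries relative to the off-diagonal ones. I do not anticipate a genuine obstacle: the statement is essentially a sign computation, and the only point requiring care is the bookkeeping with the $\pm\infty$ convention for invalid updates. The induction sidesteps this entirely by establishing validity of the updates (equivalently, positivity of each $A^t_{i\setminus j}$) before the ratio formula is invoked, so the nonpositive numerator is never divided by a quantity that could flip its sign.
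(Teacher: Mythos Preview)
Your proposal is correct and follows essentially the same induction-on-$t$ sign argument as the paper. You are in fact slightly more careful: the paper handles an invalid update as a separate trivial branch ($a^t_{i\to j}=-\infty\le 0$), whereas you show directly that $A^t_{i\setminus j}\ge\Gamma_{ii}>0$ so the invalid case never arises---a strengthening the paper only records in the subsequent lemma.
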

\begin{proof}
This result follows by induction on $t$.  First, suppose that $c_{ij} \geq 0$.  If the update is not valid, then $a_{i\rightarrow j}^t = -\infty$ which trivially satisfies the inequality. Otherwise, we have
\begin{eqnarray*}
a_{i\rightarrow j}^t & = & \frac{-\Big(\frac{\Gamma_{ij}}{c_{ij}}\Big)^2}{\Gamma_{ii} + \sum_{k\in \partial i \setminus j} c_{ki}a_{k\rightarrow i}^{t-1} + (c_{ji}-1)a_{j\rightarrow i}^{t-1}}\hspace{.5cm}\\
& \leq & 0,
\end{eqnarray*}
where the inequality follows from the induction hypothesis.
\end{proof}

\begin{lemma}
For any symmetric matrix $\Gamma$ with strictly positive diagonal, if $c_{ij} < 0$ for all $i$ and $j$, then all of the computation trees are positive definite.
\end{lemma}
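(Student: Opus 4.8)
The plan is to deduce the statement directly from Lemma \ref{neg} together with the sign of the parameters. Recall from the start of Section \ref{sec:vars} that positive definiteness of a computation tree amounts, for present purposes, to the assertion that the corresponding belief $\tau_i^t$ is a finite, strictly convex quadratic; equivalently, that no denominator $A_{i\setminus j}^t$ appearing in Algorithm \ref{alg:pairs} is nonpositive and that the leading coefficient $P_i^t := \Gamma_{ii} + \sum_{k\in\partial i} c_{ki}\,a_{k\to i}^t$ of each $\tau_i^t$ is strictly positive. So it suffices to establish these two facts for all $i$ and all $t\ge 0$.

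I would obtain both by making the induction in the proof of Lemma \ref{neg} slightly more explicit, proving by induction on $t$ that $a^t$ is finite and every entry of $a^t$ is $\le 0$. The base case $a^0=0$ is clear. For the inductive step, write
\[
A_{i\setminus j}^t \;=\; \Gamma_{ii} \;+\; \sum_{k\in\partial i\setminus j} c_{ki}\,a_{k\to i}^{t-1} \;+\; (c_{ji}-1)\,a_{j\to i}^{t-1}.
\]
Since $c_{ki}<0$ and $c_{ji}-1<0$, while every entry of $a^{t-1}$ is $\le 0$ by the inductive hypothesis, each correction term is nonnegative, so $A_{i\setminus j}^t\ge \Gamma_{ii}>0$. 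Hence the update is valid and $a_{i\to j}^t = -(\Gamma_{ij}/c_{ij})^2/A_{i\setminus j}^t$ is finite and $\le 0$, which closes the induction and simultaneously shows no denominator is ever nonpositive. Finally, for every $i$ and $t$, the summands of $P_i^t$ satisfy $c_{ki}\,a_{k\to i}^t\ge 0$ (again $c_{ki}<0$ and $a_{k\to i}^t\le 0$, with equality when $t=0$), so $P_i^t\ge \Gamma_{ii}>0$; thus every belief is strictly convex and every computation tree is positive definite.

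I do not foresee a genuine obstacle: the proof is a short sign argument. The only care needed is bookkeeping — in particular keeping track of the ``$-1$'' in the coefficient $(c_{ji}-1)$ of the back-message term and checking that this term too is nonnegative when the parameters are negative — and being clear that the object shown to be positive definite here is the one the section cares about, namely the convexity of the beliefs (equivalently, the validity of every update), rather than the literal weighted information matrix of the unrolled tree, whose diagonal entries need not all be positive once $c<0$.
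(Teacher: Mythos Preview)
Your proposal is correct and follows essentially the same route as the paper: the paper's proof also reduces positive definiteness of the computation trees to the inequality $\Gamma_{ii}+\sum_{k\in\partial i} c_{ki}\,a_{k\to i}^{t}>0$, invokes Lemma \ref{neg} to get $a_{k\to i}^{t}\le 0$, and then observes that with $c_{ki}<0$ each summand is nonnegative so the whole expression is at least $\Gamma_{ii}>0$. Your version is simply more explicit about the bookkeeping (you unroll the induction to verify that every $A_{i\setminus j}^{t}\ge\Gamma_{ii}>0$, hence no update blows up), and your closing caveat about what ``positive definite computation tree'' means here is apt but not needed for the paper's argument.
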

\begin{proof}
The computation trees are all positive definite if and only if $\Gamma_{ii} + \sum_{k\in \partial i} c_{ki}a_{k\rightarrow i}^{t} > 0$ for all $t$.  By Lemma \ref{neg}, $a^t_{i\rightarrow j} \leq 0$ for all $t$, and as result, $\Gamma_{ii} + \sum_{k\in \partial i} c_{ki}a_{k\rightarrow i}^{t} \geq \Gamma_{ii} > 0$ for all $t$.
\end{proof}

As in the case when $c_{ij} \geq 1$ for all $(i,j)\in E$, the eigenvalues on each computation tree are again bounded way from zero, but the $a_{i\rightarrow j}^t$ no longer form a monotonic decreasing sequence when $c_{ij} < 0$ for all $(i,j)\in E$.  If all of the computation trees remain positive definite in the limit, then the beliefs will all be positive definite upon convergence.  If the estimates for the means converge as well, then the converged beliefs must be locally decodable to the correct minimizing assignment.  Notice that none of the above arguments for the variances require $\Gamma$ to be positive definite.  Indeed, we have already seen an example of a matrix with a strictly positive diagonal and negative eigenvalues (see the matrix in Figure \ref{fig:quadcover}) such that variance estimates converge.

\subsection{Synchronous Versus Asynchronous Updates}
\label{sec:sync}
The synchronous message-passing updates described in Algorithm \ref{alg:pairs} enforce a particular ordering on the updates performed at each time step.  We can construct an asynchronous version of Algorithm \ref{alg:pairs} by allowing some arbitrary ordering of message updates.  The resulting asynchronous algorithm is given by Algorithm \ref{paira}.  Because each asynchronous computation tree is a principal submatrix of a synchronous computation tree and principal submatrices of positive definite matrices are positive defintie, we can easily check that all of the results of the previous section extend to this asynchronous algorithm as well.

Asynchronous algorithms allow for quite a bit more flexibility in the scheduling of message updates, and as we will see experimentally in Section \ref{sec:exp}, asynchronous algorithms can have better convergence properties than the corresponding synchronous algorithms.  To see why this might be the case, we will again exploit the properties of graph covers.  Specifically, we will show that these two algorithms are related via a special 2-cover of the base factor graph.

\begin{algorithm}[t]
\caption{Asynchronous Reweighted message-passing Algorithm\label{paira}}
\begin{algorithmic}[1]
\STATE Initialize the messages to some finite vector.

\STATE Choose some ordering of the variables such that each variable is updated infinitely often, and perform the following update for each variable $j$ in order

\FOR{each $i \in \partial j$}
\STATE Update the message from $i$ to $j$:
\begin{align*}
m_{i\rightarrow j}(x_j) \coloneqq & \kappa + \min_{x_i} \Big[ \frac{\psi_{ij}(x_i,x_j)}{c_{ij}} + (c_{ij} - 1)m_{j\rightarrow i}(x_i)+\phi_i(x_i) +  \sum_{k\in\partial i \setminus j} c_{ki} m_{k\rightarrow i}(x_i)\Big].
\end{align*}
\ENDFOR
\end{algorithmic}
\end{algorithm}

Every pairwise factor graph, $G = (V_G,E_G)$, admits a bipartite 2-cover, $H = (V_G\times\{1,2\},
E_H)$, called the Kronecker double cover of $G$.  We will denote copies of the variable $x_i$ in this 2-cover as $x_{i_1}$ and $x_{i_2}$.  For every edge $(i,j)\in E_G$, $(i_1, j_2)$ and $(i_2, j_1)$
belong to $E_H$.  In this way, nodes labeled with a one are only connected to
nodes labeled with a two (see Figure \ref{fig:2cover}).  Note that if $G$ is already a bipartite graph, then the Kronecker double cover of $G$ is simply two disjoint copies of $G$.

\begin{figure}
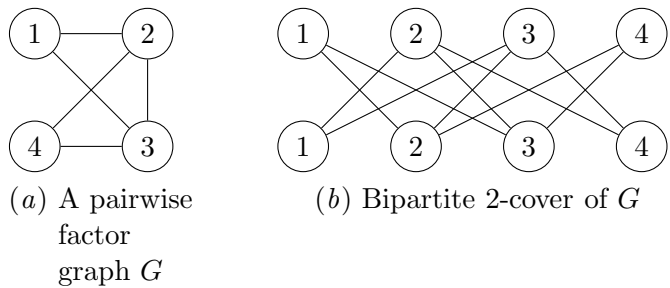

\centering
 \subfigure[A pairwise factor graph $G$]{\includeteximage[]{fig_3.txt}  }\hspace{1cm}
 \subfigure[Bipartite 2-cover of $G$]{\includeteximage[]{fig_4.txt}}                 
\caption[Kronecker double cover of a pairwise factor graph.]{The Kronecker double cover (b) of a pairwise factor graph (a).  The node labeled $i\in G$ corresponds to the variable node $x_i$.}
\label{fig:2cover}
\end{figure}

We can view the synchronous algorithm described in Algorithm \ref{alg:pairs} as a
specific asynchronous algorithm on the Kronecker double cover where we perform the asynchronous update
for every variable in the same partition on alternating iterations (see
Algorithm \ref{pairbi}).  

\begin{algorithm}[t]
\caption{Bipartite Asynchronous Algorithm\label{pairbi}}
\begin{algorithmic}[1]
\STATE Initialize the messages to some finite vector.

\STATE Iterate the following until convergence: update all of the outgoing messages from nodes
labeled one to nodes labeled two and then update all of the outgoing messages from
nodes labeled two to nodes labeled one using the asynchronous update rule:
\begin{align*}
 m_{i\rightarrow j}(x_j)  \coloneqq & \kappa + \min_{x_i} \Big[ 
\frac{\psi_{ij}(x_i,x_j)}{c_{ij}} + (c_{ij} - 1)m_{j\rightarrow i}(x_i)+\phi_i(x_i) + \sum_{k\in\partial i \setminus j} c_{ki} m_{k\rightarrow i}(x_i)\Big].
\end{align*}
\end{algorithmic}
\end{algorithm}

By construction, the message vector produced by Algorithm
\ref{pairbi} is simply a concatenation of two consecutive time steps of the
synchronous algorithm.  Specifically, for all $t\geq 1$
\begin{eqnarray*}
m_H^t = \begin{bmatrix}m_G^{2t-1}\\ m_G^{2t-2}\end{bmatrix}.  
\end{eqnarray*}

Therefore, the messages passed by Algorithm \ref{alg:pairs} are identical to those passed by an asynchronous algorithm on the Kronecker double cover.  From our earlier analysis, we know that even if $\Gamma$ is positive definite, not every cover necessarily corresponds to a convex objective function.  If the Kronecker double cover is such a ``bad'' cover, then we might expect that synchronous reweighted algorithm may not converge to the correct solution.  This reasoning is not unique to iterative message-passing algorithms.  In the next section, we will see that it can be applied to other iterative techniques for quadratic minimization.

\subsubsection{The Gauss-Seidel and Jacobi Methods}
\label{sec:other}
Because minimizing symmetric positive definite quadratic functions is equivalent to solving symmetric positive definite linear systems, well-studied algorithms such as Gaussian elimination, Cholesky decomposition, etc. can be used to compute the minimum.  In addition, many iterative algorithms have been proposed to solve the linear system $\Gamma x = h$: Gauss-Seidel iteration, Jacobi iteration, the algebraic reconstruction technique, etc.

In this section, we will show that the previous graph cover analysis can also be used to reason about the Jacobi and Gauss-Seidel algorithms (Algorithms \ref{j} and \ref{gs}).  When $\Gamma$ is symmetric positive definite, the objective function, $\frac{1}{2}x^T\Gamma x - h^Tx$, is a convex function of $x$.  Consequently, we could use a coordinate descent scheme in an attempt to minimize the objective function.  The standard cyclic coordinate descent algorithm for this problem is known as the Gauss-Seidel algorithm.  

In the same way that Algorithm \ref{alg:pairs} is a synchronous version of Algorithm \ref{paira}, the Jacobi algorithm is a synchronous version of the Gauss-Seidel algorithm.  To see this, observe that the iterates produced by the Jacobi algorithm are related to the iterates of the Gauss-Seidel algorithm on a larger problem.  Specifically, given a symmetric $\Gamma\in\mathbb{R}^{n\times n}$ and $h\in\mathbb{R}^{n}$, construct $\Gamma'\in\mathbb{R}^{2n\times 2n}$ and $h'\in\mathbb{R}^{2n}$ as follows:
\begin{eqnarray*}
h'_i & = & h_{\lceil h_i/n\rceil}\\
\Gamma' & = & \begin{bmatrix}D & M\\ M & D\end{bmatrix},
\end{eqnarray*}
where $D$ is a diagonal matrix with the same diagonal entries as $\Gamma$ and $M = \Gamma - D$.

\begin{algorithm}[t]
\caption{Jacobi Iteration\label{j}}
\begin{algorithmic}[1]
\STATE Choose an initial vector $x^0 \in \mathbb{R}^n$.

\STATE For iteration $t = 1,2,...$ set
\[x^t_j = \frac{h_j - \sum_{k} \Gamma_{jk} x^{t-1}_k}{\Gamma_{jj}}\]
for each $j\in \{1,...,n\}$.
\end{algorithmic}
\end{algorithm}

\begin{algorithm}[t]
\caption{Gauss-Seidel Iteration\label{gs}}
\begin{algorithmic}[1]
\STATE Choose an initial vector $x \in \mathbb{R}^n$.

\STATE Choose some ordering of the variables, and perform the following update
for each variable $j$, in order:
\[x_j = \frac{h_j - \sum_{k} \Gamma_{jk} x_k}{\Gamma_{jj}}.\]
\end{algorithmic}
\end{algorithm}

$\Gamma'$ is the analog of the Kronecker double cover discussed in Section \ref{sec:sync}.  Let $x^0\in\mathbb{R}^n$ be an initial vector for the Jacobi algorithm performed on the matrix $\Gamma$ and fix $y^0\in\mathbb{R}^{2n}$ such that $y^0_i = x_{1 + (i-1 \bmod {n})}$.  Further, suppose that we update the variables in the order 1,2,...,2n in the Gauss-Seidel algorithm.  If $y^t$ is the vector produced after $t$ complete cycles of the Gauss-Seidel algorithm, then $y^t = \begin{bmatrix}x^{2t-1}\\x^{2t}\end{bmatrix}$.  Also, observe that, for any $y^t$ such that $\Gamma' y^t = h'$, we must have that $\Gamma \Big[\frac{x^{2t-1} + x^{2t}}{2}\Big] = h$.  

With these two observations, any convergence result for the Gauss-Seidel algorithm can be extended to the Jacobi algorithm.  Consider the following:

\begin{theorem}
Let $\Gamma$ be a symmetric positive semidefinite matrix with a strictly positive diagonal.  The Gauss-Seidel algorithm converges to a vector $x^*$ such that $\Gamma x^* = h$ whenever such a vector exists.  
\end{theorem}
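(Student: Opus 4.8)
The plan is to recast one cyclic sweep of Gauss--Seidel as an affine iteration and to analyze its fixed-point behavior through the iteration matrix. Write $\Gamma = D + L + L^{T}$, where $D$ is the (strictly positive) diagonal part and $L$ is the strictly lower-triangular part. Since $D+L$ is lower triangular with diagonal entries $\Gamma_{jj}>0$, it is invertible, and updating $x_1,\ldots,x_n$ in order is exactly the map $x \mapsto Bx + c$ with $B \coloneqq -(D+L)^{-1}L^{T}$ and $c \coloneqq (D+L)^{-1}h$. Because $(D+L)^{-1}\Gamma = I - B$, the fixed points of this affine map are precisely the solutions of $\Gamma x = h$, and by hypothesis one exists. (This matches the coordinate-descent picture: Gauss--Seidel is cyclic coordinate descent on the convex $f(x)=\tfrac12 x^{T}\Gamma x - h^{T}x$, which is bounded below since a minimizer exists, so $f(x^{t})$ is monotone and convergent and $\nabla f(x^{t})=\Gamma x^{t}-h\to 0$, already fixing the component of $x^{t}$ in the range of $\Gamma$; the null-space component is what the spectral picture below is needed for.) Thus it suffices to show $x^{t}=Bx^{t-1}+c$ converges for every starting point, i.e.\ that $B$ is \emph{semiconvergent}: every eigenvalue has modulus at most $1$, the only one of modulus $1$ is $\lambda=1$, and $\lambda=1$ is semisimple.

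First I would establish the eigenvalue bound. For an eigenpair $(\lambda,z)$ of $B$ we have $-L^{T}z = \lambda(D+L)z$; left-multiplying by $z^{*}$ and writing $d=z^{*}Dz>0$ (strict positivity of the diagonal), $q=z^{*}Lz$, and $z^{*}L^{T}z=\bar q$, the resulting scalar identity can be solved to give $z^{*}\Gamma z = d\,\lvert 1-\lambda\rvert^{2}/(1-\lvert\lambda\rvert^{2})$ whenever $\lvert\lambda\rvert\neq 1$. Positive semidefiniteness makes the left side nonnegative, which (since $d>0$ and $\lvert 1-\lambda\rvert^{2}>0$ for $\lambda\neq 1$) forces $1-\lvert\lambda\rvert^{2}>0$, i.e.\ $\lvert\lambda\rvert<1$; and if $\lvert\lambda\rvert=1$ the same identity, before dividing, forces $d(\bar\lambda-1)=0$, hence $\lambda=1$. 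This is the positive-semidefinite analogue of the standard argument that Gauss--Seidel converges for positive definite matrices.

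Next I would show $\lambda=1$ carries no Jordan block. Since $I-B=(D+L)^{-1}\Gamma$ we have $\ker(I-B)=\ker\Gamma$, so it is enough to prove $\ker(I-B)^{2}=\ker(I-B)$. If $(I-B)^{2}w=0$, put $u=(I-B)w\in\ker\Gamma$; then $(D+L)u=\Gamma w$, and pairing with $u$ gives $u^{T}(D+L)u = u^{T}\Gamma w = (\Gamma u)^{T}w = 0$. Combining $\Gamma u=0$ with the elementary identity $u^{T}Lu=u^{T}L^{T}u=\tfrac12 u^{T}(L+L^{T})u$ shows $u^{T}(D+L)u=\tfrac12 u^{T}Du$, so $u^{T}Du=0$ and $u=0$ by positivity of $D$; hence $(I-B)w=0$. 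Semiconvergence now follows, and I would finish by decomposing $\mathbb{R}^{n}=\ker(I-B)\oplus W$ into $B$-invariant subspaces with $B$ acting as the identity on $\ker(I-B)$ and with spectral radius $<1$ on $W$: consistency ($\Gamma x^{*}=h$) forces the $\ker(I-B)$-component of $c$ to vanish, so that component of $x^{t}$ stays frozen at its initial value, while on $W$ the iteration is a contraction that converges geometrically by a Neumann series. Therefore $x^{t}$ converges, and by continuity of $x\mapsto Bx+c$ the limit is a fixed point, i.e.\ solves $\Gamma x=h$.

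The main obstacle is precisely the singular case: unlike the positive definite setting one cannot invoke $\varrho(B)<1$, since $1$ is genuinely an eigenvalue whenever $\Gamma$ is singular. The real work is (i) carrying the spectral estimate through with only $z^{*}\Gamma z\ge 0$ rather than $>0$, and (ii) ruling out secular (linear-in-$t$) drift of the iterates along $\ker\Gamma$ by showing $\lambda=1$ is semisimple and that consistency annihilates the corresponding component of the forcing term $c$. Both of these steps rely essentially on the strict positivity of the diagonal of $\Gamma$.
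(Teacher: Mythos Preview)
Your proof is correct, but there is nothing in the paper to compare it against: the paper does not prove this theorem at all and simply refers the reader to Section~10.5.1 of \citet{byrne}. What you have supplied is therefore strictly more than the paper provides.

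For the record, your argument is the standard semiconvergence analysis for Gauss--Seidel in the singular symmetric positive semidefinite case, and each step checks out. The identity $z^{*}\Gamma z = d\,|1-\lambda|^{2}/(1-|\lambda|^{2})$ is the Reich--Ostrowski computation adapted to allow $z^{*}\Gamma z=0$; combined with $d>0$ it correctly forces $|\lambda|<1$ or $\lambda=1$. The Jordan-block step is also right: from $(D+L)u=\Gamma w$ and $\Gamma u=0$ you get $u^{T}(D+L)u=0$, and since $u^{T}Lu=\tfrac12 u^{T}(L+L^{T})u=-\tfrac12 u^{T}Du$ (using $\Gamma u=0$), this yields $\tfrac12 u^{T}Du=0$, hence $u=0$. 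Finally, the observation that consistency gives $c=(I-B)x^{*}\in\operatorname{range}(I-B)$ is exactly what kills any drift along $\ker\Gamma$, so the affine iteration converges. The parenthetical coordinate-descent remark is heuristic and not needed for the argument, which is fine since you do not rely on it.
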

\begin{proof}
See Section 10.5.1 of \citet{byrne}.
\end{proof}
\hspace{1cm}\\*
Using our observations, we can immediately produce the following new result:
\begin{corollary}
Let $\Gamma$ be a symmetric positive semidefinite matrix with positive diagonal and let $\Gamma'$ be constructed as above.  If $\Gamma'$ is a symmetric positive semidefinite matrix and there exists an $x^*$ such that $\Gamma x^* = h$, then the sequence $\frac{x^t + x^{t-1}}{2}$ converges to $x^*$ where $x^t$ is the $t^{th}$ iterate of the Jacobi algorithm.
\end{corollary}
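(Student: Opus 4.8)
The plan is to push the Gauss--Seidel convergence theorem through the correspondence, set up immediately above the statement, between the Jacobi iteration on $\Gamma$ and cyclic Gauss--Seidel on the enlarged matrix $\Gamma' = \begin{bmatrix} D & M \\ M & D \end{bmatrix}$ with $h' = \begin{bmatrix} h \\ h \end{bmatrix}$. Concretely, I would run Gauss--Seidel on $(\Gamma', h')$ with variable ordering $1, 2, \dots, 2n$ and initial vector $y^0 = \begin{bmatrix} x^0 \\ x^0 \end{bmatrix}$, and then invoke two facts already recorded in the text: the identity $y^t = \begin{bmatrix} x^{2t-1} \\ x^{2t} \end{bmatrix}$ relating one full Gauss--Seidel cycle to two Jacobi iterates, and the projection observation that $\Gamma' y = h'$ implies $\Gamma \frac{x^{2t-1}+x^{2t}}{2} = h$.

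The first step is to check that $\Gamma'$ satisfies the hypotheses of the Gauss--Seidel theorem stated above. Symmetry is immediate since $D$ is diagonal and $M = \Gamma - D$ is symmetric; the diagonal of $\Gamma'$ coincides with that of $\Gamma$ and is therefore strictly positive; and positive semidefiniteness of $\Gamma'$ is part of the hypothesis. The only remaining requirement is that $\Gamma' y = h'$ be solvable, and this follows by lifting $x^*$: for $y = \begin{bmatrix} x^* \\ x^* \end{bmatrix}$ one has $\Gamma' y = \begin{bmatrix} (D+M)x^* \\ (M+D)x^* \end{bmatrix} = \begin{bmatrix} \Gamma x^* \\ \Gamma x^* \end{bmatrix} = \begin{bmatrix} h \\ h \end{bmatrix} = h'$.

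With the hypotheses in place, the Gauss--Seidel theorem gives $y^t \to y^\infty$ for some $y^\infty$ with $\Gamma' y^\infty = h'$. Writing $y^\infty = \begin{bmatrix} u \\ v \end{bmatrix}$, the identity $y^t = \begin{bmatrix} x^{2t-1} \\ x^{2t} \end{bmatrix}$ forces the odd-indexed Jacobi iterates $x^1, x^3, \dots$ to converge to $u$ and the even-indexed iterates $x^2, x^4, \dots$ to converge to $v$. Hence, for every $t$, $\frac{x^t + x^{t-1}}{2}$ is the average of two consecutive iterates, one odd-indexed and one even-indexed, so the whole sequence (not merely a subsequence) converges to $\frac{u+v}{2}$; and applying the projection observation to $y^\infty$ shows $\Gamma \frac{u+v}{2} = h$, so the limit solves $\Gamma x = h$. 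When $\Gamma$ is positive definite this limit is the unique solution $x^*$, and in the merely positive semidefinite case the statement should be read as: the averaged Jacobi sequence converges to \emph{some} solution of $\Gamma x = h$, the particular one depending on $x^0$.

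Since the heavy lifting — convergence of Gauss--Seidel on a positive semidefinite matrix and the Jacobi/Gauss--Seidel correspondence — is already available, the only thing requiring care is the bookkeeping: making sure the cycle-to-two-iterates identity genuinely holds for the chosen ordering and initialization, and matching the two blocks of $y^\infty$ to the even and odd Jacobi subsequences consistently so that the full averaged sequence converges rather than oscillating between two accumulation points. I expect this, together with being precise about the meaning of ``$x^*$'' in the non-definite case, to be the main (and fairly modest) obstacle.
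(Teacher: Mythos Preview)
Your proposal is correct and matches the paper's approach exactly: the paper presents the corollary as an immediate consequence of the Gauss--Seidel convergence theorem combined with the already-recorded identities $y^t = \begin{bmatrix} x^{2t-1} \\ x^{2t} \end{bmatrix}$ and $\Gamma' y = h' \Rightarrow \Gamma\frac{x^{2t-1}+x^{2t}}{2} = h$, and you have simply spelled out those steps. Your remark that in the merely semidefinite case the limit is \emph{some} solution of $\Gamma x = h$ (depending on $x^0$) rather than the particular $x^*$ named in the hypothesis is a correct and worthwhile caveat that the paper's phrasing elides.
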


If $\Gamma'$ is not positive semidefinite, then the Gauss-Seidel algorithm (and by extension the Jacobi algorithm) may or may not converge when run on $\Gamma'$. 

\subsection{Convergence of the Means}
\label{sec:means}
If the variances converge, then the fixed points of the message updates for the means correspond to the solution of a particular linear system $Mb = d$.  In fact, we can show that Algorithm \ref{paira} is exactly the Gauss-Seidel algorithm for this linear system.  First, we construct the matrix $M\in\mathbb{R}^{2|E|\times 2|E|}$:
\begin{eqnarray*}
M_{ij, ij} & = & A^*_{i\setminus j} \text{ for all $i\in V$ and $j\in\partial i$}\hspace{.5cm}\\
M_{ij, ki} & = & c_{ki}\frac{\Gamma_{ij}}{c_{ij}} \text{ for all $i\in V$ and for all $j,k\in\partial i$ such that $k\neq j$}\\
M_{ij, ji} & = & (c_{ij}-1)\frac{\Gamma_{ij}}{c_{ij}} \text{ for all $i\in V$ and $j\in\partial i$}.
\end{eqnarray*}
Here, $A^*$ is constructed from the vector of converged variances, $a^*$.  All other entries of the matrix are equal to zero.  Next, we define the vector $d\in\mathbb{R}^{2|E|}$ by setting $d_{ij} = h_i\Gamma_{ij}/c_{ij}$ for all $i\in V$ and $j\in\partial i$.  

By definition, any fixed point, $b^*$, of the message update equations for the means must satisfy $Mb^*=d$.  With these definitions, Algorithm \ref{paira} is precisely the Gauss-Seidel algorithm for this matrix.  Similarly, Algorithm \ref{alg:pairs} corresponds to the Jacobi algorithm.  Unfortunately, $M$ is neither symmetric nor diagonally dominant, so the standard results for the convergence of the Gauss-Seidel algorithm do not necessarily apply to this situation.  In practice, we have observed that the asynchronous reweighted message-passing algorithm converges if each $c_{ij}$ is sufficiently large (or sufficiently negative).

\subsection{Experimental Results}
\label{sec:exp}

\begin{figure}
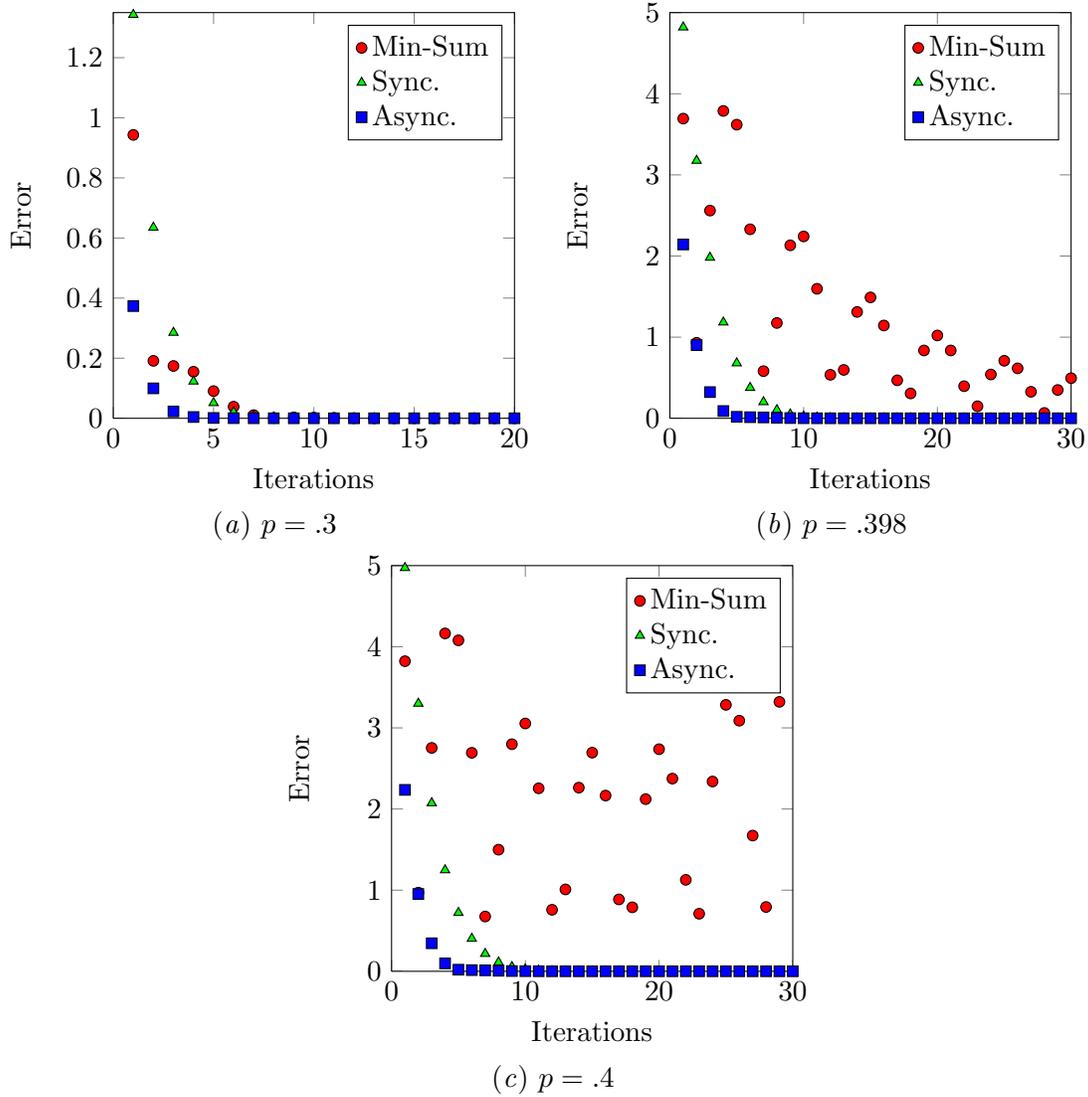

\centering
\subfigure[$p = .3$]{\label{res1}\includeteximage[]{fig_5.txt}}
\subfigure[$p = .398$]{\label{res2}\includeteximage[]{fig_6.txt}}
\subfigure[$p = .4$]{\label{res3}\includeteximage[]{fig_7.txt}}
\caption[A comparison of the performance of the min-sum algorithm and the reweighted algorithm for the quadratic minimization problem.]{The error, measured by the 2-norm, between the current mean estimate and the true mean at each step of the min-sum algorithm, the asynchronous algorithm with $c_{ij}=2$ for all $i\neq j$, and the synchronous algorithm with $c_{ij}=2$ for all $i\neq j$ for the matrix in \eqref{eqn:4chord}.  Notice that all of the algorithms have a similar performance when $p$ is chosen such that the matrix is scaled diagonally dominant.  When the matrix is not scaled diagonally dominant, the min-sum algorithm converges more slowly or does not converge at all.}
\label{fig:exp_chord}
\end{figure}

Even simple experiments demonstrate the advantages of the reweighted message-passing algorithm compared to the typical min-sum algorithm. Throughout this section, we will assume that $h$ is chosen to be the vector of all ones. Let $\Gamma$ be the following matrix.
\begin{eqnarray}
\begin{pmatrix}
                         1     &              p      &            -p       &          -p\\
                   p      &                   1      &            -p        &                 0\\
                  -p      &            -p    &                   1        &          -p\\
                  -p     &                   0      &            -p        &                 1
                  \end{pmatrix}\label{eqn:4chord}
\end{eqnarray}
The standard min-sum algorithm converges to the correct solution for \hspace{1pt}$0 \leq p < .39865$ \citep{malioutov}.  Figure \ref{fig:exp_chord} illustrates the behavior of the min-sum algorithm, the asynchronous reweighted message-passing algorithm with $c_{ij}=2$ for all $i\neq j$, and the synchronous algorithm with $c_{ij}=2$ for all $i\neq j$ for different choices of the constant $p$.  Each iteration of the asynchronous algorithm consists of cyclically updating all messages.  In the examples in Figure \ref{fig:exp_chord}, the synchronous and asynchronous algorithm always converge rapidly to the correct mean while the min-sum algorithm converges slowly or not at all as $p$ approaches $.5$.

While this is a simple graph, the behavior of the algorithm for different choices of the vector $c$ is already apparent.  If we set $c_{ij}=3$ for all $i\neq j$, then empirically, both the synchronous and asynchronous algorithms converge for all $p \in (-.5, .5)$, which is the entire positive definite region for this matrix.  However, different choices of the parameter vector can greatly increase or decrease the number of iterations required for convergence.  Figure \ref{fig:c} illustrates the iterations to convergence for the reweighted algorithms at $p = .4$ versus $c$.  

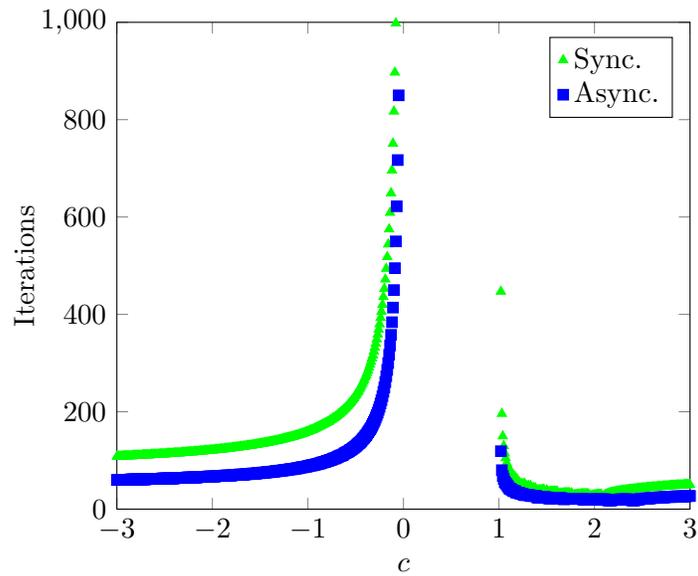
\begin{figure}
\centering

\begin{tikzpicture}[only marks]
\begin{axis}[scale only axis,
width=0.5\textwidth,
height=0.3\textheight,
xmin=-3, xmax=3,
ymin=0, ymax=1000, legend entries={Sync., Async.},
legend style={nodes=right},
legend pos= north east,
xlabel = $c$,
ylabel = Iterations,
xticklabel style={/pgf/number format/.cd,fixed,precision=5}
]
		\addplot[mark=triangle*, mark options={color=green,fill=green}] 
			table [x=x,y=z] {c_plot.csv};
		\addplot[mark=square*, mark options={color=blue,fill=blue}] 
			table [x=x,y=y] {c_plot.csv};
			
\end{axis}
	\end{tikzpicture}
	\caption{The number of iterations needed to reduce the error of the mean estimates below $10^{-6}$ using the reweighted algorithms as a function of $c$  for the matrix in \eqref{eqn:4chord} with $p = .4$.  The gap in the plot is predicted by the arguments at the end of Section \ref{sec:gcovers}.}
	\label{fig:c}
\end{figure}

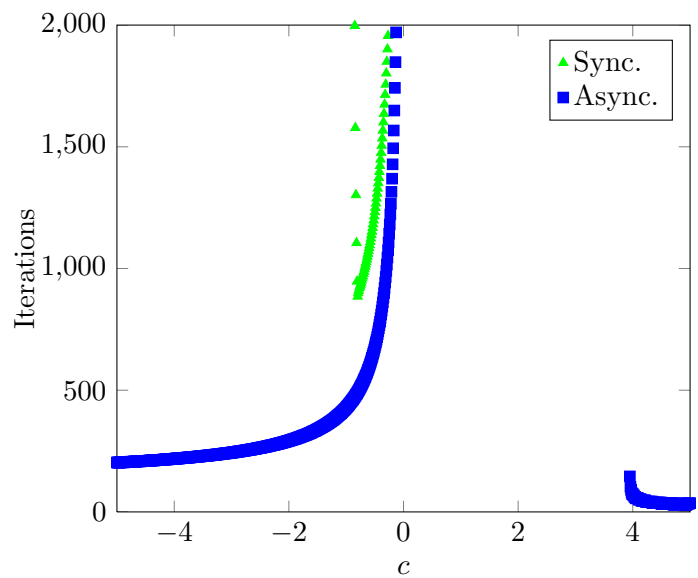
\begin{figure}
\centering

\begin{tikzpicture}[only marks]
\begin{axis}[scale only axis,
width=0.5\textwidth,
height=0.3\textheight,
xmin=-5, xmax=5,
ymin=0, ymax=2000, legend entries={Sync., Async.},
legend style={nodes=right},
legend pos= north east,
xlabel = $c$,
ylabel = Iterations,
xticklabel style={/pgf/number format/.cd,fixed,precision=5}
]
		\addplot[mark=triangle*, mark options={color=green,fill=green}] 
			table [x=x,y=z] {rnd_plot.csv};
		\addplot[mark=square*, mark options={color=blue,fill=blue}] 
			table [x=x,y=y] {rnd_plot.csv};
			
\end{axis}
	\end{tikzpicture}
	\caption{The number of iterations needed to reduce the error of the mean estimates below $10^{-6}$ using the reweighted algorithms as a function of $c$  for the matrix in \eqref{eqn:rnd}.  Again, the gap in the plot is predicted by the arguments at the end of Section \ref{sec:gcovers}.}
	\label{fig:rnd}
\end{figure}

Although both the synchronous and asynchronous algorithms converge for the entire positive definite region in the above example, the synchronous and asynchronous algorithms can have very different convergence properties and damping may be required in order to force the synchronous algorithm to converge over arbitrary graphs, even for sufficiently large $c$.  Figure \ref{fig:rnd} illustrates these convergence issues for the matrix,
\begin{eqnarray}
\begin{pmatrix}
    45 &   21 &   23 &  -42\\
    21 &   83 &    8 &  -32\\
    23 &    8 &   14 &  -29\\
   -42 &  -32 &  -29 &  134\\
\end{pmatrix}.\label{eqn:rnd}
\end{eqnarray}
The above matrix was randomly generated.  Similar observations can be made for many other positive definite matrices as well.

\section{Conclusions and Future Research}
\label{sec:conc}
In this work, we explored the properties of reweighted message-passing algorithms for the quadratic minimization problem.  Our motivation was to address the convergence issues in the GaBP algorithm by leveraging the reweighting.  To this end, we employed graph covers to prove that standard approaches to convergence and correctness that exploit duality and coordinate ascent/descent such as MPLP \citep{MPLP}, tree-reweighted max-product \citep{waintrbp}, and \citet{sontag} are doomed to fail outside of walk-summable models.  While the GaBP variances may not converge outside of walk-summable matrices, we showed that there always exists a choice of reweighing parameters that guarantees monotone convergence of the variances.  Empirically, a similar strategy seems to guarantee convergence of the means as well.  As a result, our approach demonstrably outperforms the GaBP algorithm on this problem. We conclude this work with a discussion of a few open problems and directions for future research.

\subsection{Convergence}
The main open questions surrounding the performance of the reweighted algorithm relate to questions of convergence.  First, for all positive definite $\Gamma$, we conjecture that there exists a sufficiently large (or sufficiently negative) choice of the parameters such that the means always converge.

Second, in practice, one typically uses a damped version of the message updates in order to attempt to force convergence.  For the min-sum algorithm, the damped updates are given by
\begin{eqnarray*}
m^t_{i\rightarrow j}(x_j) & = & \kappa + \delta m^t_{i\rightarrow j}(x_j) + (1-\delta)\Big[\min_{x_i} \psi_{ij}(x_i,
x_j) + \phi_i(x_i)+\sum_{k \in \partial i \setminus j}m^{t-1}_{k\rightarrow i}(x_i)\Big].
\end{eqnarray*}

The damped min-sum algorithm with damping factor $\delta = 1/2$ empirically seems to converge if $\Gamma$ is positive definite and all of the computation trees remain positive definite \citep{malioutov}.  We make the same observation for the damped version of Algorithm \ref{alg:pairs}.  

In practice, the damped synchronous algorithm with $\delta=1/2$ and the asynchronous algorithm appear to converge for all sufficiently large choices of the parameter vector as long as $\Gamma$ is positive definite.  We conjecture that this is indeed the case:  for all positive definite $\Gamma$ there exists a $c$ such that if $c_{ij} = c $ for all $i\neq j$, then the asynchronous algorithm always converges.  In this line of exploration, the relationship between the synchronous and the asynchronous algorithms described in Section \ref{sec:sync} may be helpful.

Finally, \citet{convexciamac} were able to provide rates of convergence in the case that $\Gamma$ is walk-summable by using a careful analysis of the computation trees.  Perhaps similar ideas could be adapted for the computation trees produced by the reweigthed algorithm.  

\subsection{General Convex Minimization}

The previous graph cover observations can, in theory, be applied to minimize general convex functions, but in practice, computing and storing the message vector may be inefficient.  Despite this, many of the previous observations can be extended to any convex function $f:C\rightarrow\mathbb{R}$ such that $C\subseteq\mathbb{R}^n$ is a convex set.  

As was the case for quadratic minimization, convexity of the objective function $f^G$ does not necessarily guarantee convexity of the objective function $f^H$ for every finite cover $H$ of $G$.  Recall that the existence of graph covers that are not bounded from below can be problematic for the reweighted message-passing algorithm.  For quadratic functions, this cannot occur if the matrix is scaled diagonally dominant or, equivalently, if the objective function corresponding to every finite graph cover is positive definite.  This equivalence suggests a generalization of scaled diagonal dominance for arbitrary convex functions based on the convexity of their graph covers.  Such convex functions would have desirable properties with respect to iterative message-passing schemes.

\begin{lemma}
Let $f$ be a convex function that factorizes over a graph $G$. Suppose that for every finite cover $H$ of $G$, $f^H$ is convex.   If $x^G\in\arg\min_x f(x)$, then for every finite cover $H$ of $G$, $x^H$, the lift of $x^G$ to $H$, minimizes $f^H$.
\end{lemma}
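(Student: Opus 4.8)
The plan is to reduce the claim to a single inequality relating $f^H$ at an arbitrary point to $f^G$ at its fiberwise average, and then to establish that inequality from the convexity of $f^H$ together with the symmetry of the cover. Write $\pi$ for the covering map of a $k$-cover $H$. Because each vertex of $G$ has $k$ preimages in $H$ and, by the permutation-matrix description in \eqref{eq:perm}, each edge of $G$ also has exactly $k$ preimages, one checks directly that $f^H(\text{lift}_H(z)) = k\, f^G(z)$ for every $z\in\mathbb{R}^{|V_G|}$, and that $\text{proj}_G(\text{lift}_H(z)) = z$; in particular $f^H(x^H) = k\, f^G(x^G)$. Hence it suffices to show
\[ f^H(y)\ \geq\ k\, f^G\!\big(\text{proj}_G(y)\big) \qquad\text{for all } y\in\mathbb{R}^{|V_H|},\]
since the right-hand side is at least $k\, f^G(x^G) = f^H(x^H)$ by the minimality of $x^G$.

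I would first prove this inequality when $H$ is a connected \emph{normal} (regular) cover, whose deck transformation group $\Sigma$ has order $k$ and acts freely and transitively on every fiber of $\pi$. Each $\sigma\in\Sigma$ is a label-preserving graph automorphism of $H$, so reindexing the sum defining $f^H$ gives $f^H(\sigma\cdot y) = f^H(y)$. Since $\Sigma$ is transitive on fibers, the average $\bar y := \tfrac{1}{k}\sum_{\sigma\in\Sigma}\sigma\cdot y$ is constant on fibers and equals $\text{lift}_H(\text{proj}_G(y))$. Convexity of $f^H$ then yields $k\, f^G(\text{proj}_G(y)) = f^H(\bar y) \leq \tfrac{1}{k}\sum_{\sigma\in\Sigma} f^H(\sigma\cdot y) = f^H(y)$, which is exactly what is wanted.

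Next I would drop the normality assumption. A disconnected cover is a disjoint union of connected covers of $G$, so it is enough to treat a connected finite cover $H$. Such an $H$ corresponds to a finite-index subgroup $K$ of the (free) fundamental group of $G$, and its normal core $N = \bigcap_g g K g^{-1}$ is normal of finite index; the cover $H'$ associated with $N$ is therefore a finite normal cover of $G$ that factors as $H' \xrightarrow{\rho} H \xrightarrow{\pi} G$. By hypothesis $f^{H'}$ is convex, so the previous step shows that $x^{H'} = \text{lift}_{H'}(x^G)$ minimizes $f^{H'}$. Applying the key identity to the cover $\rho$ gives $f^{H'}(\text{lift}_{\rho}(w)) = (\deg\rho)\, f^H(w)$ for all $w$; taking $w = y$ and $w = x^H$ (and using $\text{lift}_{\rho}(x^H) = x^{H'}$) and dividing by $\deg\rho$ gives $f^H(y) \geq f^H(x^H)$, as required. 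When $G$ is a tree the statement is vacuous, since every connected cover of a tree is isomorphic to it.

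The step I expect to be the main obstacle is the second paragraph, together with the reason the third is necessary rather than optional: unlike the quadratic case, convexity of $f^G$ does \emph{not} force the individual potentials $\phi_i,\psi_{ij}$ to be convex (for quadratics the edge potentials $\Gamma_{ij}x_ix_j$ are indefinite), so a term-by-term Jensen argument is unavailable and one must instead exploit the \emph{global} convexity of $f^H$ via the deck-group symmetry — which is available only for normal covers, forcing the reduction through the normal core.
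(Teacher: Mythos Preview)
Your argument is correct, but it takes a considerably longer route than the paper's. The paper's proof is essentially one line: since $x^G$ minimizes the convex function $f^G$, the zero vector lies in $\partial f^G(x^G)$; because $f$ factorizes as a sum of terms each depending only on a vertex or an edge, and a covering map is an isomorphism on neighborhoods, the $i$-th component of any (sub)gradient of $f^H$ at the lifted point $x^H$ coincides with the $\pi(i)$-th component of the corresponding (sub)gradient of $f^G$ at $x^G$. Hence $0\in\partial f^H(x^H)$, and convexity of $f^H$ finishes. In the differentiable case this is literally $\big(\nabla f^H(x^H)\big)_i=\big(\nabla f^G(x^G)\big)_{\pi(i)}=0$, the convex analogue of Lemma~\ref{crit}.

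Your approach instead proves the global inequality $f^H(y)\ge k\,f^G(\text{proj}_G(y))$ by averaging over the deck group and applying Jensen, then handles non-normal covers by passing to the normal core. This is sound, and the inequality you obtain for normal covers is genuinely stronger than the bare statement of the lemma. The trade-offs: the paper's argument is local and, for a fixed cover $H$, uses only the convexity of $f^G$ and of $f^H$ itself; your reduction through the normal core invokes the convexity hypothesis on a possibly much larger auxiliary cover $H'$. Your closing remark correctly notes that a term-by-term Jensen argument is blocked because the individual potentials need not be convex, but that is not what the paper does: the subgradient argument is not Jensen on the summands, it is the observation that first-order information at a lifted point is determined by a single neighborhood, which the covering map carries isomorphically back to $G$.
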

\begin{proof}
This follows from the observation that all convex functions are subdifferentiable over their domains and that $x^H$ is a minimum of $f^H$ if and only if the zero vector is contained in the subgradient of $f^H$ at $x^H$.  
\end{proof}

Even if the objective function is not convex for some cover, we may still be able to use the same trick as in Theorem \ref{posdefcomp} in order to force the computation trees to be convex.  Let $C\subseteq \mathbb{R}^n$ be a convex set.  If $f:C\rightarrow \mathbb{R}$ is twice continuously differentiable, then $f$ is convex if and only if its Hessian, the matrix of second partial derivatives, is positive semidefinite on the interior of $C$.  For each fixed $x\in C$, Theorem \ref{posdefcomp} demonstrates that there exists a choice of the vector $c$ such that all of the computation trees are convex at $x$, but it does not guarantee the existence of a $c$ that is independent of $x$.  

For twice continuously differentiable functions, sufficient conditions for the convergence of the min-sum algorithm that are based on a generalization of scaled diagonal dominance are discussed in \citet{convexciamac}, and extending the above ideas is the subject of future research.

\appendix
\section{Proof of Theorem \ref{2cover}}
\label{ap:2cover}
Without loss of generality, we can assume that $\Gamma$ has a unit diagonal.  We break the proof into several pieces:
\begin{itemize}
\item $(1 \Rightarrow 2)$ Without loss of generality we can assume that $|I-\Gamma|$ is irreducible (if not we can make this argument on each of its connected components).  Let $1 > \lambda > 0$ be an eigenvalue of $|I - \Gamma|$ with eigenvector $x > 0$ whose existence is guaranteed by the Perron-Frobenius theorem. For any row $i$, we have:
\begin{eqnarray*}
x_i > \lambda x_i  =  \sum_{j\neq i} |\Gamma_{ij}| x_j.
\end{eqnarray*}
Since $\Gamma_{ii} = 1$ this is the definition of scaled diagonal dominance with $w = x$.

\item $(2 \Rightarrow 3)$ If $\Gamma$ is scaled diagonally dominant then so is every one of its covers.  Scaled diagonal dominance of a symmetric matrix with a positive diagonal implies that the matrix is symmetric positive definite.  Therefore, all covers must be symmetric positive definite.

\item $(3 \Rightarrow 4)$ Trivial.

\item $(4 \Rightarrow 1)$ Let $\widetilde{\Gamma}$ be any 2-cover of $\Gamma$.  Without loss of generality, we can assume that $\widetilde{\Gamma}$ has the form \eqref{eq:perm}.

First, observe that by the Perron-Frobenius theorem there exists an eigenvector $x > 0\in\mathbb{R}^n$ of $|I - \Gamma|$ with eigenvalue $\varrho(|I - \Gamma|)$.  Let $y\in\mathbb{R}^{2n}$ be constructed by duplicating the values of $x$ so that $y_{2i} = y_{2i+1} = x_i$ for each $i\in\{0...n\}$. By Lemma \ref{eigen}, $y$ is an eigenvector of $|I-\widetilde{\Gamma}|$ with eigenvalue equal to $\varrho(|I - \Gamma|)$. We claim that this implies $\varrho(|I - \widetilde{\Gamma}|)=\varrho(|I - \Gamma|)$. Assume without loss of generality that $|I-\widetilde{\Gamma}|$ is irreducible; if not, then we can apply the following argument to each connected component of  $|I-\widetilde{\Gamma}|$. By the Perron-Frobenius theorem again, $|I - \widetilde{\Gamma}|$ has a unique positive eigenvector (up to scalar multiple), with eigenvalue equal to the spectral radius. Thus, $\varrho(|I - \Gamma|)=\varrho(|I - \widetilde{\Gamma|})$ because $y>0$. 

We will now construct a specific cover $\widetilde{\Gamma}$ such that $\widetilde{\Gamma}$ is positive definite if and only if $\Gamma$ is walk-summable.  To do this, we'll choose the $P_{ij}$ as in \eqref{eq:perm} such that $P_{ij} = I$ if $\Gamma_{ij} < 0$ and $P_{ij} = \begin{pmatrix}0 & 1\\1 & 0\end{pmatrix}$ otherwise. Now define $z\in\mathbb{R}^{2n}$ by setting $z_i = (-1)^icy_i$, where the constant $c$ ensures that $\left\|z\right\| = 1$.  

Consider the following:
\begin{eqnarray*}
z^T\widetilde{\Gamma}z & = & \sum_{i=1}^n\sum_{j \neq i} \Gamma_{ij}[z_{2i} , z_{2i+1}]P_{ij}\begin{bmatrix}z_{2j}\\z_{2j+1}\end{bmatrix}+ \sum_i\Gamma_{ii} z_i^2\\
& = & 1 - 2 \sum_{i > j} |\Gamma_{ij}|c^2y_iy_j.
\end{eqnarray*}
Recall that $y$ is the eigenvector of $|I - \widetilde{\Gamma}|$ corresponding to the largest eigenvalue and $\left\|cy\right\| = 1$. By definition and the above,
\begin{eqnarray*}
\varrho(|I - \Gamma|) & = & \varrho(|I - \widetilde{\Gamma}|)\\
& = & \frac{cy^T|I - \widetilde{\Gamma}|cy}{c^2y^Ty}\\
& = & 2 \sum_{i > j} |\Gamma_{ij}|c^2y_iy_j.
\end{eqnarray*}
Combining all of the above we see that $z^T\widetilde{\Gamma}z = 1 - \varrho(|I - \Gamma|)$.  Now, $\widetilde{\Gamma}$ positive definite implies that $z^T\widetilde{\Gamma}z > 0$, so $1 - \varrho(|I - \Gamma|) > 0$.  In other words, $\Gamma$ is walk-summable.
\end{itemize}

\section{Proof of Theorem \ref{posdefcomp}}
\label{ap:posdefcomp}
Let $T_v(t)$ be the depth $t$ computation tree rooted at $v$, and let $\Gamma'$ be the matrix corresponding to $T_v(t)$ (i.e., the matrix generated by the potentials in the computation tree).  We will show that the eigenvalues of $\Gamma'$ are bounded from below by some $\epsilon > 0$.  For any $i\in T_v(t)$ at depth $d$ define:
\begin{eqnarray*}
w_i & = & \Big(\frac{s}{r}\Big)^d,
\end{eqnarray*}
where $r$ is as in the statement of the theorem and $s$ is a positive real to be determined below.  Let $W$ be a diagonal matrix whose entries are given by the vector $w$.  By the Ger\v{s}gorin disc theorem \citep{horn}, all of the eigenvalues of $W^{-1}\Gamma'W$ are contained in
\begin{eqnarray*}
\cup_{i\in T_v(t)} \Big\{ z\in \mathbb{R} : |z - \Gamma'_{ii}| \leq \frac{1}{w_i} \sum_{j\neq i} w_j |\Gamma'_{ij}|\Big\}.
\end{eqnarray*}
Because all of the eigenvalues are contained in these discs, we need to show that there is a choice of $s$ and $r$ such that for all $i\in T_v(t)$, $|\Gamma'_{ii}| - \frac{1}{w_i} \sum_{j\neq i} w_j |\Gamma'_{ij}| \geq \epsilon$.  

Recall from Section \ref{newcompsec} that $|\Gamma'_{ij}| = \eta\frac{|\Gamma_{ij}|}{r}$ for some constant $\eta$ that depends on $r$.  Further, all potentials below the potential on the edge $(i,j)$ are multiplied by $\eta\gamma$ for some constant $\gamma$.  We can divide out by this common constant to obtain equations that depend on $r$ and the elements of $\Gamma$.  Note that some self-potentials will be multiplied by $r-1$ while others will be multiplied by $r$.  With this rewriting, there are three possibilities:
\begin{enumerate}
\item $i$ is a leaf of $T_v(t)$.  In this case, we need $|\Gamma_{ii}|  > \frac{1}{w_i}\frac{|\Gamma_{ip(i)}|}{r} w_{p(i)}$.  Plugging in the definition of $w_i$, we have
\begin{eqnarray}
|\Gamma_{ii}|  > \frac{|\Gamma_{ip(i)}|}{s} \label{eq:leaf}.
\end{eqnarray}

\item $i$ is not a leaf of $T_v(t)$ or the root.  In this case, we need
\begin{eqnarray*}
|\Gamma_{ii}| & > & \frac{1}{w_i}\Big[\frac{|\Gamma_{ip(i)}|}{r} w_{p(i)} + \frac{s^2(r-1)}{r^3}|\Gamma_{ip(i)}| w_{p(i)}+\sum_{k\in\partial i - p(i)} |\Gamma_{ki}|w_k \Big].
\end{eqnarray*}
Again, plugging the definition of $w_i$ into the above yields
\begin{eqnarray*}
|\Gamma'_{ii}| & > & \frac{|\Gamma_{ip(i)}|}{s} + \frac{s}{r}\Big[\frac{r-1}{r}|\Gamma_{ip(i)}|+\sum_{k\in\partial i - p(i)} |\Gamma_{ki}| \Big].
\end{eqnarray*}
\item $i$ is the root of $T_v(t)$.  Similar to the previous case, we need $|\Gamma_{ii}| w_i > \sum_{k\in\partial i} |\Gamma_{ki}|w_k$.  Again, plugging the definition of $w_i$ into the above yields
\begin{eqnarray*}
|\Gamma_{ii}| > \frac{s}{r}\sum_{k\in\partial i} |\Gamma_{ki}|.
\end{eqnarray*}
\end{enumerate}
  
None of these bounds are time dependent.  As such, if we choose $s$ and $r$ to satisfy the above constraints, then there must exist some $\epsilon > 0$ such that smallest eigenvalue of any computation tree is at least $\epsilon$.  Fix $s$ to satisfy \eqref{eq:leaf} for all leaves of $T_v(t)$.  This implies that $(|\Gamma_{ii}| - \frac{|\Gamma_{ip(i)}|}{s}) > 0$ for any $i \in T_v(t)$.  Finally, we can choose a sufficiently large $r$ that satisfies the remaining two cases for all $i\in T_v(t)$.\\

\bibliography{biblio}

\begin{thebibliography}{23}
\providecommand{\natexlab}[1]{#1}
\providecommand{\url}[1]{\texttt{#1}}
\expandafter\ifx\csname urlstyle\endcsname\relax
  \providecommand{\doi}[1]{doi: #1}\else
  \providecommand{\doi}{doi: \begingroup \urlstyle{rm}\Url}\fi

\bibitem[Byrne(2008)]{byrne}
C.~L. Byrne.
\newblock \emph{Applied Iterative Methods}.
\newblock A K Peters, Ltd., 2008.

\bibitem[Frey et~al.(2001)Frey, Koetter, and Vardy]{frey}
B.J. Frey, R.~Koetter, and A.~Vardy.
\newblock Signal-space characterization of iterative decoding.
\newblock \emph{Information Theory, IEEE Transactions on}, 47\penalty0
  (2):\penalty0 766--781, Feb. 2001.

\bibitem[Globerson and Jaakkola(2007)]{MPLP}
A.~Globerson and T.~Jaakkola.
\newblock Fixing max-product: Convergent message passing algorithms for {MAP}
  {LP}-relaxations.
\newblock \emph{Neural Information Processing Systems (NIPS)}, Dec. 2007.

\bibitem[Hazan and Shashua(2010)]{hazan}
T.~Hazan and A.~Shashua.
\newblock Norm-product belief propagation: Primal-dual message-passing for
  approximate inference.
\newblock \emph{Information Theory, IEEE Transactions on}, 56\penalty0
  (12):\penalty0 6294 --6316, Dec. 2010.

\bibitem[Horn and Johnson(1990)]{horn}
R.~A. Horn and C.~R. Johnson.
\newblock \emph{Matrix Analysis}.
\newblock Cambridge University Press, 1990.

\bibitem[Johnson et~al.(2009)Johnson, Bickson, and Dolev]{johnsonfix}
J.K. Johnson, D.~Bickson, and D.~Dolev.
\newblock Fixing convergence of {G}aussian belief propagation.
\newblock In \emph{Information Theory, IEEE International Symposium on (ISIT)},
  pages 1674--1678, July 2009.

\bibitem[Malioutov(2008)]{malthesis}
D.~Malioutov.
\newblock Approximate inference in gaussian graphical models.
\newblock Ph.D. thesis, EECS, MIT, 2008.

\bibitem[Malioutov et~al.(2006)Malioutov, Johnson, and Willsky]{malioutov}
D.~M. Malioutov, J.~K. Johnson, and A.~S. Willsky.
\newblock Walk-sums and belief propagation in {G}aussian graphical models.
\newblock \emph{Journal of Machine Learning Research (JMLR)}, 7:\penalty0
  2031--2064, 2006.
\newblock ISSN 1533-7928.

\bibitem[Meltzer et~al.(2009)Meltzer, Globerson, and Weiss]{weissconv}
T.~Meltzer, A.~Globerson, and Y.~Weiss.
\newblock Convergent message passing algorithms: a unifying view.
\newblock \emph{The 25th Conference on Uncertainty in Artificial Intelligence
  (UAI)}, June 2009.

\bibitem[Moallemi and Van~Roy(2009)]{quadciamac}
C.C. Moallemi and B.~Van~Roy.
\newblock Convergence of min-sum message passing for quadratic optimization.
\newblock \emph{Information Theory, IEEE Transactions on}, 55\penalty0
  (5):\penalty0 2413 --2423, may 2009.

\bibitem[Moallemi and Van~Roy(2010)]{convexciamac}
C.C. Moallemi and B.~Van~Roy.
\newblock Convergence of min-sum message-passing for convex optimization.
\newblock \emph{Information Theory, IEEE Transactions on}, 56\penalty0
  (4):\penalty0 2041 --2050, april 2010.

\bibitem[Ruozzi et~al.(2009)Ruozzi, Thaler, and Tatikonda]{allerton09}
N.~Ruozzi, J.~Thaler, and S.~Tatikonda.
\newblock Graph covers and quadratic minimization.
\newblock In \emph{Communication, Control, and Computing, 47th Annual Allerton
  Conference on}, Sept. 2009.

\bibitem[Sontag and Jaakkola(2009)]{sontag}
D.~Sontag and T.~Jaakkola.
\newblock Tree block coordinate descent for {MAP} in graphical models.
\newblock In \emph{Proceedings of the 12th International Conference on
  Artificial Intelligence and Statistics (AISTATS)}, Clearwater Beach, Florida,
  April 2009.

\bibitem[Tatikonda and Jordan(2002)]{tatjor02}
S.~Tatikonda and M.~I. Jordan.
\newblock Loopy belief propagation and {G}ibbs measures.
\newblock In \emph{In Uncertainty in Artificial Intelligence (UAI)}, pages
  493--500, Edmonton, Alberta, Canada, 2002.

\bibitem[Vontobel and Koetter(2005)]{vontobel}
P.~O. Vontobel and R.~Koetter.
\newblock Graph-cover decoding and finite-length analysis of message-passing
  iterative decoding of ldpc codes.
\newblock \emph{CoRR}, abs/cs/0512078, 2005.

\bibitem[Wainwright et~al.(2004)Wainwright, Jaakkola, and Willsky]{wainwright}
M.~Wainwright, T.~Jaakkola, and A.~S. Willsky.
\newblock Tree consistency and bounds on the performance of the max-product
  algorithm and its generalizations.
\newblock \emph{Statistics and Computing}, 14\penalty0 (2):\penalty0 143--166,
  2004.

\bibitem[Wainwright et~al.(2003{\natexlab{a}})Wainwright, Jaakkola, and
  Willsky]{waintrbp}
M.~J. Wainwright, T.~S. Jaakkola, and A.~S. Willsky.
\newblock Tree-reweighted belief propagation algorithms and approximate {ML}
  estimation via pseudo-moment matching.
\newblock In \emph{Proceedings of the 9th International Conference on
  Artificial Intelligence and Statistics}, Key West, Florida, Jan.
  2003{\natexlab{a}}.

\bibitem[Wainwright et~al.(2003{\natexlab{b}})Wainwright, Jaakkola, and
  Willsky]{waingauss}
M.J. Wainwright, T.S. Jaakkola, and A.S. Willsky.
\newblock Tree-based reparameterization framework for analysis of sum-product
  and related algorithms.
\newblock \emph{Information Theory, IEEE Transactions on}, 49\penalty0
  (5):\penalty0 1120 -- 1146, May 2003{\natexlab{b}}.

\bibitem[Wainwright et~al.(2005)Wainwright, Jaakkola, and Willsky]{waintree}
M.J. Wainwright, T.S. Jaakkola, and A.S. Willsky.
\newblock {MAP} estimation via agreement on (hyper)trees: message-passing and
  linear programming.
\newblock \emph{Information Theory, IEEE Transactions on}, 51\penalty0
  (11):\penalty0 3697--3717, Nov. 2005.

\bibitem[Weiss(2000)]{weisslocal}
Y.~Weiss.
\newblock Correctness of local probability propagation in graphical models with
  loops.
\newblock \emph{Neural Comput.}, 12\penalty0 (1):\penalty0 1--41, 2000.

\bibitem[Weiss and Freeman(2001{\natexlab{a}})]{weissgauss}
Y.~Weiss and W.~T. Freeman.
\newblock Correctness of belief propagation in {G}aussian graphical models of
  arbitrary topology.
\newblock \emph{Neural Comput.}, 13\penalty0 (10):\penalty0 2173--2200, October
  2001{\natexlab{a}}.
\newblock ISSN 0899-7667.

\bibitem[Weiss and Freeman(2001{\natexlab{b}})]{weisscomp}
Y.~Weiss and W.T. Freeman.
\newblock On the optimality of solutions of the max-product belief-propagation
  algorithm in arbitrary graphs.
\newblock \emph{Information Theory, IEEE Transactions on}, 47\penalty0
  (2):\penalty0 736 --744, Feb. 2001{\natexlab{b}}.

\bibitem[Werner(2007)]{MSD}
T.~Werner.
\newblock A linear programming approach to max-sum problem: A review.
\newblock \emph{Pattern Analysis and Machine Intelligence, IEEE Transactions
  on}, 29\penalty0 (7):\penalty0 1165--1179, 2007.

\end{thebibliography}

\end{document}